\newcommand{\TheTitle}{Experimental Design for Partially Observed Markov Decision Processes}
\title{{\TheTitle}\thanks{This work was funded by NSF grants DMS-1053252 and DEB-1353039}}
\author{
  Leifur Thorbergsson\thanks{Department of Statistical Science, Cornell University, Ithaca, NY
   \underline{lt274@cornell.edu}.}
  \and
  Giles Hooker\thanks{Department of Biological Statistics and Computational Biology, Cornell University, Ithaca NY
    \underline{gjh27@cornell.edu}}.
}
\newcommand{\argmax}{\ensuremath{\operatornamewithlimits{argmax}}}
\newcommand{\fbold}    {\mbox{\bf f}}
\newcommand{\gbold}    {\mbox{\bf g}}
\newcommand{\xbold}    {\mbox{\bf x}}
\newcommand{\ybold}    {\mbox{\bf y}}
\newcommand{\epsilonbold} {\mbox{\boldmath${\epsilon}$}}
\newtheorem{theorem}{Theorem}
\newtheorem{mydef}{Definition}
\newtheorem{myassump}{Assumption}
\newtheorem{mylemma}{Lemma}
\newtheorem{mycorollary}{Corollary}
\begin{document}

\sloppy


%


\maketitle

\begin{abstract}
This paper deals with the question of how to most effectively conduct experiments in Partially Observed Markov Decision Processes so as to provide data that is most informative about a parameter of interest. Methods from Markov decision processes, especially dynamic programming, are introduced and then used in an algorithm to maximize a relevant Fisher Information. The algorithm is then applied to two POMDP examples. The methods developed can also be applied to stochastic dynamical systems, by suitable discretization, and we consequently show what control policies look like in the Morris-Lecar Neuron model, and simulation results are presented. We discuss how parameter dependence within these methods can be dealt with by the use of priors, and develop tools to update control policies online. This is demonstrated in another stochastic dynamical system describing growth dynamics of DNA template in a PCR model.
\end{abstract}

%
%

\section{Introduction} \label{sec:intro}

Hidden Markov Models have proven their usefulness across a wide variety of applications. In many of these applications, the user or the experimenter will have some way of influencing the transitions of the underlying Markov Chain, as in Markov Decision Processes, and such a process is called a Partially Observed Markov Decision Process (POMDP), see \cite{POMDP}. In this paper we assume that the transition probability matrix is governed by an unknown parameter $\theta$ and we wish to understand how the process can be influenced to obtain that will be most informative about $\theta$. We think of this as experimental design for Partially Observed Markov Decision Processes.

Formally, we consider a POMDP $(x_t, y_t, u_t)_{t=0,\ldots, T}$. In this setting $x_t$ is an unobserved Markov Chain, where the transition probabilities depend parametrically on the control $u_t$ that is chosen at time $t$ and an unknown parameter $\theta$. 
$x_t$ is not directly observable. Instead, the process $y_t$ is observed. We assume that $y_t$ depends on which state $x_t$ is in but that given $x_t$ the $y_t$'s are independent over time.

Our goal is to find ways to use the controls $u_t$ to improve parameter estimates of $\theta$. Our general strategy will be to use the controls to try to minimize the sample variance of the maximum likelihood estimates of $\theta$. This will be achieved by maximizing an approximation to the Fisher Information for $\theta$. The controls are calculated using dynamic programming, a popular maximization algorithm from Markov Decision Processes which outputs  an adaptive control policy, i.e. the control chosen at time $t$ is based on observations up to time $t$.

The first attempt at using dynamic controls to maximize a Fisher Information is given in \cite{FOFI}. This paper considered a Markov decision process  in which $x_t$ is directly observed and obtained controls to maximize the Fisher Information for this case. When $x_t$ is not directly observed, an estimate $\hat{x}_t$ was obtained from the observations $y_t$ by applying a particle filter and then employing the control policy that was calculated as though $x_t$ was directly observed. This objective was referred to as the ``Full Observation Fisher Information'' (FOFI) in recognition that it did not correspond to the Fisher Information for the POMDP, although it was argued that maximizing FOFI would still be a useful strategy when $y_t$ was informative about $x_t$.

This paper extends this work in directly using the POMDP structure where the Fisher Information can be calculated recursively. However, the policy for choosing $u_t$ depends on the entire history of the observations $y_t$ and cannot be tractably calculated or stored. Instead, we base the policy on the past $m$ values of $(y_t, u_{t-1})$ and show that this approximates the Fisher Information. We label the resulting approximation the ``Partial Observation Fisher Information'' (POFI) to both distinguish it from FOFI and to note that each control is obtained using only a small number of recent observations.   The control policies based on POFI and FOFI are compared and we illustrate a setting in which the control policies are quite different.


The methods developed here have application value beyond Partially Observed Markov Decision Processes. The methods in \cite{FOFI} were explicitly developed for a diffusion process of the form
\[
d \xbold = \fbold(\xbold,\theta,u(t)) dt + \Sigma^{1/2} d\mathbf W
\]
where $\theta$ is the parameter of interest, to be estimated, $u(t)$ is a control that can be chosen by the user, $\xbold$ is the vector of state variables, $\fbold$ is a vector valued function, $\mathbf W$ a Wiener process, and additionally $\xbold(t)$ is only observed partially or noisily with a continuous-valued observation $y_t$. By discretizing time, state, and observation spaces, the process can be approximated by a POMDP, allowing us to use the methods developed in this paper to devise a control policy that maximizes information about the parameter $\theta$.

The methods we use to calculate controls for maximizing Fisher Information will depend on the unknown parameter $\theta$. We illustrate how this problem can partially be overcome by assuming a prior for $\theta$ to calculate a control policy before running the experiment. Additionally we describe how, using data acquired as the experiment progresses, a posterior for $\theta$ can be used to calculate a more precise control policy. That is, parameter information from observations acquired at a time $t$ can be used to improve the policy used in what is left of the experiment. These methods will be based on the Value Iteration Algorithm (VIA), which is closely related to dynamic programming.

This paper contains three separate developments: (i) the development of a recursive formulation of the Fisher Information for POMDPs and resulting POFI approximation (Sections \ref{sec:framework} and \ref{sec:discreteexamples}), (ii) discretization methods to approximate continuous-valued processes by POMDPs (Section \ref{sec:discretization}) and (iii) the use of a prior to average the POFI or FOFI over possible parameter values when calculating optimal designs and the development of the Value Iteration Algorithm to allow the prior to be updated as the experiment progresses (Section \ref{sec:paramdependence}). Note that in developments (i) and (ii) we have assumed a single value of the unknown parameter $\theta$ to clarify the exposition.

In order to illustrate our methods we present four examples to illustrate each of these developments. Section \ref{sec:discreteexamples} presents two POMDP's. In the first, we hypothesize about the kind of systems in which we will observe large improvement in parameter estimation by using the POFI control policy over the FOFI policy. Following a discussion, we construct a mock Partially Observed Markov Decision Process, in which this improvement is shown using a simulation study.
To illustrate the real-world applicability of design in discrete POMDP's we consider a realistic POMDP from experimental economics. The model will consist of a simple adversarial game similar to the ``rock - paper - scissor" game where one player tries to play in such a way that maximizes information about the other players' strategy.

In Section \ref{subsec:ml} we illustrate the discretization of a continuous-valued system using a stochastic version of the Morris-Lecar Neuron model, a dynamical system which models voltage in a single neural cell. This model is two dimensional, but only one dimension is observed. The model has multiple parameters and we sequentially investigate how the POFI and FOFI control policies perform in estimating each of them in turn.

In Section \ref{sec:paramdependence} we observe that the policies we develop can depend on the unknown parameter $\theta$ and avoid this by averaging over a prior. To illustrate these methods, in Section \ref{subsec:pcr} we consider an example from biology, a Polymerase chain reaction (PCR) experiment where DNA template is grown in liquid substrate. The population dynamics are modeled in a dynamical system with stochastic errors, and the aim is to estimate the half-saturation constant, a parameter which controls the saturation of the template. Here we compare using a prior for $\theta$ and using VIA to calculate a control policy.

Throughout this paper we examine estimating only one parameter of interest. In many real-world scenarios, there will be multiple parameters that are the focus of attention. The techniques below can be readily extended to linear combinations of Fisher Informations for different parameters, including the trace of the Fisher Information matrix: T-Optimal designs. Other functions of the Fisher Information Matrix such as the determinant could also be targeted, although this would require a more substantial extension of the methods presented here (see \cite{ChalonerVerdinelli95} for an overview, and Section \ref{subsec:ext}). The methods we describe focus solely on one parameter at a time in order to maintain the connection between the design criterion and the asymptotic variance of the MLE. In more complex cases, design criteria potentially trade-off precision in one parameter in favor of another and must be carefully constructed for a specific problem. More complex design tasks for nonlinear dynamical systems is an important open problem, but beyond the scope of this paper.

Practical implementation of the methods in this paper will depend on the context. In some cases, inputs can be manually manipulated, but in systems which evolve in shorter time-scales recent advances in automating experiments may be needed. However, we draw a distinction between our problem of a single, evolving system with automatic protocols that perform multiple, parallel experiments (see \cite{king2009,hayden2014}). We note that the numerical implementation of our methods restricts us to systems low-dimensional state spaces and systems that are known up to the parameters we wish to estimate (see \cite{bongard2007} in contrast in ordinary differential equation models). Over-coming these obstacles represents an important focus of future research.


\section{Framework and Assumptions} \label{sec:framework}

We consider a Markov decision process \mbox{$(x_t, u_t)_{t=0,\ldots,T}$}. In this setting $x_t$ is a Markov chain, but the transition probabilities at time $t$ depend on a control $u_t$ chosen at that time. We assume a finite state space $\mathcal{X}$ for the state process $x_t $ and that the controls available belong to some finite set $\mathcal{U}$. We let $K$ denote the size of $\mathcal{X}$ and $l$ the size of $\mathcal{U}$. The transition probabilities are assumed to be parametric and we write $p(x_{t+1}|x_t,u_t,\theta)$ short for $p(x_{t+1} = x^i| x_t = x^j,u_t= u^r,\theta)$ where $x^i, x^j \in \mathcal{X}$ and $u^r\in \mathcal{U}$.

In addition to this, we assume that the process $x_t$ is latent and we only observe the related observations $y_t \in \mathcal{Y}$ whose relation to the $x_t$ can also depend on $\theta$. We write $p(y_t|x_t,\theta)$ short for $p(y_t=y^i|x_t=x^j,\theta)$, where $x^j\in \mathcal{X}$ and $y^j \in \mathcal{Y}$, and let $L$ denote the size of $\mathcal{Y}$.
This makes the system a Partially Observed Markov Decision Process.
It has a finite horizon $T$ in which we observe $y_0 \ldots y_T$. We will use the short hand notation $y_{m:t}$ to denote $y_m,\ldots,y_t$, i.e. the observations between time $m$ and $t$, and analogous notation for $u_t$ and $x_t$.

The objective is to use the controls to maximize the information we get about the parameter $\theta$ through the observed process $y_{0:T}$. The parameter estimation is done using maximum likelihood and it is therefore natural to try to maximize the Fisher Information of our observed process which we can express as
\[
FI=E\left[\sum_{t=0}^{T-1}\left(\frac{\partial}{\partial\theta} \log p(y_{t+1}| y_{0:t}, u_{0:t},\theta) \right)^2  \right ]
\]
Details on this construction   and regularity conditions required for its existence are given in  Appendix \ref{sec:express FI}. We have suppressed the dependence of this system on the initial state $x_0$. In our examples we treat $x_0$ as known, but it may also be marginalized with a simple modification of the filter below.
When we consider continuous time dynamical systems the observation spaces will be continuous, but we will use this discretized Fisher Information as an approximation to the actual Fisher Information of the observations.

\subsection{A Dynamic Program}\label{sec:dynprog}{

In order to maximize the Fisher Information we employ the techniques of stochastic dynamic programming. In the context of a (completely observed) Markov Decision Process, we consider obtaining a reward at each time point $t$ that is equal to $C_t(x_t,u_t)$. Our objective is to maximize the total expected reward $E[\sum_{t=0}^{T-1} C_t(x_t,u_t)]$ by use of the controls. The essence of dynamic programming is that by starting at time $T-1$ we can determine the best action $u_{T-1}(x_{T-1})$ to be taken for each value of the state. Working backwards, we can compute an optimal policy $u_t(x_t)$ that maps a state $x_t$ to a control $u_t$ that will maximize the expected reward, accounting for the already-calculated policies $u_s(x_s)$ for $s > t$.

In a generic dynamic program we set $V_T= 0$ and then going backwards from $t=T-1, \ldots, 0$ solve
\[
V_t(x_t) = \max_{u_t} \{C_t(x_t,u_t) + E_{x_{t+1}}[V_{t+1}(x_{t+1})|x_t,u_t] \}
\]
where $V_t$ is called the value function, and we get the associated control
\[
u^*_t(x_t) = \argmax_{u_t} \{C_t(x_t,u_t) + E_{x_{t+1}}[V_{t+1}(x_{t+1})|x_t,u_t] \}
\]
for every state $x_t$. This will give us a policy of what control to use at a certain state $x_t$ at a certain time $t$. The use of this control policy will maximize the total expected reward $E[\sum_t C_t(x_t,u_t)]$. We refer to \cite{puter} for a detailed description of dynamic programming.

With this in mind, we return to the POMPD setting. To choose controls to maximize the Fisher Information, we set
\[
C_t(y_{0:t}, u_{0:t},\theta) =  \left(\frac{\partial}{\partial\theta} \log p(y_{t+1}|y_{0:t}, u_{0:t},  \theta)\right)^2
 \]
 and we try to maximize $FI(\theta)=E[\sum_t C_t(y_{0:t}, u_{0:t},\theta)]$. Note that in this instance the reward function depends on the entire history of observations and controls up to time $t$ which will motivate our approximation below.

The Value function in the corresponding dynamic program is
\[
FI_t(y_{0:t},u_{0:(t-1)}) = \max_{u_t} \left\{ E_{y_{t+1}}[C_t(y_{0:t},u_{0:t},\theta) +FI_{t+1}(y_{0:(t+1)},u_{0:t},\theta)|y_{0:t},u_{0:t},\theta ]   \right \}
\]
and we denote it the Fisher Information to Go .
%
%
%

\subsection{Partial Observation Fisher Information}

A problem with the formulation above is that just in the first step of the dynamic program ($t=T-1$) we would have to calculate the Fisher Information to Go for $L^{T}l^{T-1}$ many combinations of $y_{0:(t-1)}$ and $u_{0:(t-2)}$.
This is formidable for even modest dimensions. We therefore approximate the process by conditioning only on the last $m+1$ observations. We label the resulting approximation the ``Partial Observation Fisher Information'' to distinguish it from the ideal target and from the FOFI objective used in \cite{FOFI}:
\[
POFI_m = E\sum_{t=0}^{T-1}\left(\frac{\partial}{\partial \theta} \log p(y_{t+1}|y_{(t-m):t},u_{(t-m):t},\nu_{t-m},\theta)\right)^2
\]
where $\nu_{t-m}$ is some prior that we assume for $x_{t-m}$, although we generally suppress it in notation since we assume it is fixed. If $t-m <0$ we set $(t-m):t$ to mean $0:t$ to ease notation. We will use ``POFI'' in place of $POFI_m$ when discussing the general strategy and note that even $m=1$ produces useful designs.

The reward becomes
\[
C(y_{(t-m):t},u_{(t-m):t} ,\theta)= \left(\frac{\partial}{\partial\theta} \log p(y_{t+1}|y_{(t-m):t},u_{(t-m):t} ,\theta)\right)^2
\]
and
 \[
 POFI_{t,m} (y_{(t-m):t},u_{(t-m):(t-1)}) = \max_{u_t} \left\{E_{y_{t+1}}[C(y_{(t-m):t},u_{(t-m):t} ,\theta)+FI_{t+1,m}|y_{(t-m):t},u_{(t-m):t} ,\theta) ]   \right \}
 \]
the Partial Observation Fisher Information To Go. The pseudocode for the corresponding dynamic program is given in Appendix \ref{sec:dyn POFI}.
%
For this approximate dynamic program to be sensible we want the approximated Fisher Information to approach the true Fisher Information as $m$ increases. This holds given that the POMDP process satisfies certain technical mixing conditions:
\begin{myassump}
 Modified Strong Mixing Conditions. For each control $u$ there exist a transition kernel $K^u: \mathcal{Y}\rightarrow \sigma(\mathcal{X})$ and measurable functions $\varsigma^-$ and $\varsigma^+$ from $\mathcal{Y}$ to $(0,\infty)$ such that for any $A \in \sigma(\mathcal{X})$, $y \in \mathcal{Y}$ and $x\in \mathcal{X}$,
 \[
  \varsigma^-(y)K^u(y,A) \leq \sum_{x'\in A} p(y_{t+1}=y|x_{t+1}=x')p(x_{t+1}=x'|x_t=x,u_t=u) \leq \varsigma^+(y) K^u(y,A)
 \]
 where $\sigma(\mathcal{X})$ signifies the $\sigma$-algebra of $\mathcal{X}$.
 \label{mixing}
\end{myassump}
These conditions bound the probability of, starting at $x$, observing $y_{t+1}=y$ via a transition through the set $A$. These are a generalization of conditions given in \cite{Cappe} to POMDPs and the discussion of the systems that satisfy Assumption \ref{mixing} given there generalizes readily.

Employing this assumption, we can show the following result:
\begin{theorem}
Assume the conditions in Assumption \ref{mixing} hold at $\theta$. Then, for $m < T$ and any control policy, we have
\[
 |FI - POFI_m| \leq c_1(T-1-m)\rho^{m/2}
 \]
\label{FI approx}
where the constant $c_1$ and $\rho<1$ do not depend on $m$ or $T$.
\end{theorem}
Explicit expressions for $c_1$ and $\rho$ are given in Appendices \ref{sec:bounds} and \ref{sec:mixing} respectively.
The proof requires extensions of work in \cite{Cappe} and is given in Appendix \ref{sec:bounds} with further technical results reserved to the Appendix \ref{app:B}.
Theorem \ref{FI approx} states that the difference between $POFI_m$ and the ideal Fisher Information decreases exponentially quickly as $m$ increases. Thus $POFI_m$ represents a viable approximation for the Fisher Information in the dynamic program when $m$ is small. Below, we demonstrate good performance even for $m=1$.

The runtime of the dynamic program however also grows exponentially in $m$ and we found that while setting $m=0$, i.e. conditioning on one observation, gave poor results in some of our simulations, conditioning on two observations, i.e. $m=1$, generally gave good results when compared to other control policies. In experiments with discrete systems, $m=2$ and $m=3$ generated further improvements but with clearly diminishing returns. For systems with large numbers of states or when discretizing continuous systems, setting $m=2$ increased runtime greatly and was in most of our applications infeasible without making more approximations to how the dynamic program is run. 
We leave a problem-specific analysis of a means to choose $m$ to future work, but note that this can, at a minimum, be approached by simulation.

\subsection{FOFI Dynamic Program}
An alternative method to choose controls was proposed by \cite{FOFI}. They considered constructing an optimal control policy for the Fisher Information that would apply if $(x_t)$ were observed directly;
\[
FI =E\sum_{t=0}^{T-1}\left(\frac{\partial}{\partial \theta} \log p(x_{t+1}|x_t,u_t,\theta)\right)^2
\]
This is labeled the Full Observation Fisher Information (FOFI). As noted before, when considering continuous time stochastic systems, the state space is continuous, but we use this Fisher Information as an approximation to the continuous state Fisher Information. An advantage of using FOFI over POFI is that when running the dynamic program the Markov property of the Markov Decision Process $(x_t,u_t)$ allows us to only consider a maximization over the state space $x_t \in \mathcal{X}$ but not past values $x_{0:(t-1)}$. The dynamic program for FOFI is given in Appendix \ref{sec:dyn FOFI}.

%

However, maximizing FOFI can lead to suboptimal controls since it is not the correct Fisher Information for the data. Additionally, when the actual experiment is run we do not observe $x_t$. Instead we have to use the observed values to get a probability distribution (a filter) on the state $x_t$, $p(x_t|y_{0:t},u_{0:(t-1)},\theta)$ and use the control associated with the state that has the highest probability.

The computation cost of running a dynamic program with FOFI is $O(TK^2l)$, generally lower than that of POFI; $O(TL^{m+1}l^{m+1})$. The cost of estimating the state $x_t$ at runtime is $O(K^2)$ at each time point $t$. For details see Appendix \ref{app:B1}. The exponential increase in cost as $m$ increases forces us to choose a small value of $m$ in our experiments below, which may make the POFI approximation to the Fisher Information suspect, but we found that even at these values, our controls yielded improved parameter estimates.
%
%
%

\subsection{Parameter Estimation}
After running an experiment, using one of the control policies, the parameter $\theta$ is estimated either via an EM algorithm or by directly maximizing the loglikelihood. These two estimation methods had similar accuracy in our simulations, although the EM algorithm was generally slower. Convergence of the EM algorithm is discussed in  \cite {Cappe} for Hidden Markov Models and extends naturally to POMDP's.
%

For the asymptotic properties of the MLE we refer to \cite{Cappe}, where conditions for consistency and asymptotic normality in Hidden Markov Models are given. The central elements of their proof are the stationarity of the process $(x_t,y_t)$ along with certain forgetting properties of the filter, meaning that ignoring all but the past $m$ observations (as we do) yields an error in the filter distribution that decreases exponentially in $m$. We note that if we employ a time-independent control policy (as we do in Section 5), we obtain a Hidden Markov Model and can rely on \cite{Cappe} if we assume stationarity. In the Appendix \ref{app:B} we establish extensions of forgetting properties for POMDP models more generally, which points to a more general asymptotic theory for the MLE in this case, but do not pursue this here.

Theorem \ref{FI approx} shows that using $POFI_m$ is a good approximation to the Fisher Information for running a dynamic program. This provides a control policy that is an approximation to the optimal control policy. Now consider using this approximate policy to run an experiment and then estimating $\theta$ by evaluating the MLE. The asymptotic variance of this MLE will be the  inverse of the Fisher Information, with controls from the approximate policy. It is therefore of interest to compare the Fisher Information with an optimal policy and that with the approximate policy derived from the POFI objective. In Appendix \ref{sec:best FI} we show
\begin{theorem} \label{best FI}
Given that the mixing conditions in Assumption \ref{mixing} (\ref{sec:mixing}) hold we have
\[
 0 \leq FI(u_0^*,\ldots,u_{T-1}^*) - FI(u_{0,m}^*,\ldots, u_{T-1,m}^*) \leq c_2T(T+1)  \rho^{m/2}
\]
where $u_0^*,\ldots,u_{T-1}^*$ is the optimal control policy, $u_{0,m}^*,\ldots, u_{T-1,m}^*$ the approximated optimal policy and $FI(u_0^*,\ldots,u_T^*)$, is the Fisher Information written as a function of the policy. The constant $c_2$ and $\rho$ do not depend on $m$ or $T$, see Appendix \ref{sec:best FI} for details.
\end{theorem}

 That the asymptotic variance of the MLE converges to the best possible asymptotic variance exponentially quickly in $m$ further supports our approximations.
%

\subsection{Extensions to Other Design Criteria} \label{subsec:ext}

Here we briefly comment on potential extensions to other experimental design criteria, although a full exploration of these is left to future papers. The general dynamic programming framework can readily be applied to a criteria of the form $E D(y_1,u_1,\ldots,y_{T},u_T)$ by describing the policy at time $t$ as
\[
u^*_t (y_1,u_1,\ldots,y_{t-1},u_{t-1}) = \mbox{argmax} E_{y_{t:T},u^*{(t+1):T}} D(y_1,u_1,\ldots,y_{T},u_T).
\]
However, this will be infeasible unless a lower dimensional approximation to $u^*_t$ can be found.

A further set of criteria involve functions of the Fisher Information matrix. We have already noted that the extension to weighted sums of the diagonals is straightforward. Alternatively, the asymptotic variance of an individual parameter is given by the corresponding entry of the inverse of the Fisher Information. We can target this -- effectively treating the remaining parameters as nuisance variables -- by setting
\[
u^*_t (y_1,u_1,\ldots,y_{t-1},u_{t-1}) = \mbox{argmin} \left[ \left( E_{y_{t:T},u^*{(t+1):T}} \frac{\partial^2}{\partial \theta \partial \theta^T} \log p(y_1,u_1,\ldots,y_{T},u_T) \right)^{-1} \right]_{11}.
\]
We expect that our results in Theorem \ref{FI approx} on truncation approaches to targeting individual entries of the Fisher Information can be extended to functions of the whole matrix. Similarly, we can apply the same ideas within the Value Iteration Algorithm given in Section \ref{sec:via} to account for parameter dependence, but leave these problems for future work.

\section{Discrete Examples} \label{sec:discreteexamples}
\subsection{3 state example}
%

While the FOFI strategy has been shown to be effective in \cite{FOFI} it is possible to define systems in which the strategy is not optimal and may in fact be worse than just using fixed or random controls. Usually certain parts of state space will give more information about a parameter than others, given that the state space is perfectly observed. In these cases optimal controls would try to move the process to these states. However, if the state space is only partially observed, most information might be obtained in different parts of state space and the FOFI controls become suboptimal. In cases like this, POFI often does better than FOFI even truncating to $m=1$. In this example, we demonstrate a system where FOFI and POFI choose very different controls, and using a simulation study, we show that the controls chosen by POFI produce less variable parameter estimates.

Consider a discrete time Markov chain $x_t$ with state space $\mathcal{X}=\{1,2,3\}$ and a transition probability matrix
\[
P=
\left[
\begin{array}{c c c}
1/2-p/4+u/4 & 1/3 & 0.4 - u/4 \\
p/2 & 1/3 & 0.15\\
1/2-p/4-u/4 & 1/3 & 0.45 + u/4\\
\end{array}
\right]
\]
where the parameter of interest is $p \in [0,.5]$ (this range is chosen to maintain positive entries in $P$) and the control is $u \in \{-1,1\}$. For $x_t=1$ or $x_t=3$, choosing the control $u=1$ will increase the probability of the Markov chain staying in its current state while choosing $u=-1$ will increase the probability of it leaving its state.\\

Also, assume there is a related process $z_t$ with state space $\mathcal{Z}=\{1,2\}$ whose transition probabilities depend on which state $x_t$ is in, and out of the two processes only $z_t$ is observed, that is we have observations $y_t = z_t$. We denote the transition probability matrices for $z_t$ with $(P_k)_{\{i,j\}}=p(z_{t+1}=j|z_t=i,x_{t+1}=k)$ given by
\[
P_1=
\left[
\begin{array}{c c}
.5& .5 \\
.5 & .5\\
\end{array}
\right],
P_2=
\left[
\begin{array}{c c}
.5 & .5 \\
.5 & .5\\
\end{array}
\right],
P_3=
\left[
\begin{array}{c c}
1-p/2 & p/2 \\
p/2 &1-p/2 \\
\end{array}
\right]
\]
If $x_t$ were observed we would get information about the parameter $p$ when $x_t$ leaves state $1$ and from $z_t$ when $x_t=3$. The idea here is that since the FOFI controls assume the whole state space is observed they might encourage $x_t$ to be in state $1$, while the POFI controls that take into account what is actually observed might choose the controls more intelligently.
Indeed when calculating the controls according to FOFI, the long run control is to ``leave one's state'' if $x_t=3$ and ``stay in one's state'' if $x_t=1$. The POFI policy when only $z_t$ is available ($m=0$) is to always have $u_t=-1$. The policy for POFI with $m=1$ is given in Table \ref{6 state POFI lag 2} and can be summarized as ``$u_t =1$ if $z_t = z_{t-1}$, and otherwise $u_t=-1$''. The policy for POFI with $m=2$ is given in Table  \ref{6 state POFI lag 3} in the Supplementary Materials, but can be summarized as  `` if $z_t = z_{t-1}$ and ($z_{t-1} = z_{t-2}$ or $u_{t-1} = -1$) then $u_t = 1$, else $u_t = -1$''.



\begin{table}
\begin{center}
\begin{tabular} {|c || c  c  c c c c c c |}
\hline
$u_t$  & 1& -1&-1 &1& 1&-1&-1&1\\

\hline
\hline
$z_t$       &1 & 2  &  1 &  2  & 1 & 2  & 1  &  2 \\
$u_{t-1}$ &1 &1  & 1 &  1  & -1 &-1 & -1 & -1  \\
$z_{t-1}$ &1 &1  & 2  &  2  & 1 & 1  &  2  &  2 \\

\hline
\end{tabular}

\end{center}
\caption{Long run control policy that results from using POFI, with $m=1$ in the 6 state example. The first row describes which control to use for a given history $(z_t,u_{t-1},z_{t-1})$ of observations and control. We see that $u_t =1$ if $z_t = z_{t-1}$, and otherwise $u_t=-1$ }
\label{6 state POFI lag 2}
\end{table}

\begin{table}
\begin{center}
\begin{tabular} {|c| c c c c|}
\hline
 & $m$ & bias & st. dev. & RMSE \\ 
 \hline
 FOFI & - &  0.0025 & 0.0798 & 0.08 \\ 
 POFI & 0 &   0.0037 &  0.0591 & 0.059 \\ 
  POFI & 1 &  0.0018  &0.0469  & 0.047 \\ 
 POFI & 2 &  0.0054 & 0.0467 &  0.047 \\ 
 Random & - & 0.0013 &  0.0621 & 0.062 \\ 
 \hline
\end{tabular}
\end{center}
\caption{Simulation results for the $6$ state example. We see that the controls chosen by POFI make for more accurate estimates of $p$. The FOFI policy does worse than a random policy.}
\label{6 state result}
\end{table}

To illustrate this difference, a simulation study was carried out to test which method performed better: the process $x_t$ was run for $1000$ steps with $p=.37$, using controls chosen by POFI and again using those chosen by FOFI. Additionally we ran a simulation of the same length, but where the control was chosen randomly, with $u=-1$ and $u=1$ having equal probability. Then the parameter $p$ was estimated using an EM algorithm. This was done 500 times to get an empirical distribution for the estimates of $p$. The results are given on the right in Table~\ref{6 state result}. Estimates of $p$ improved with the number of lags included in the POFI policy, but greatest improvement was observed from $m=0$ to $m=1$. Even at $m=0$, POFI provided an improvement on the Random policy while FOFI under-performed even selecting controls at random.

\subsection{Adversarial game - A POMDP example}
The following example describes an application of the above methods in the context of experimental economics. The problem is derived from  \cite{adv}, in which we wish to model how humans change their game-playing strategies over time.
%

We set up a game with two players: a Row player and a Column player. They repeatedly play a game where both simultaneously choose either left or right, and they get rewards depending on the outcome, according to Table \ref{gamble}; the Row player would for example get $2$ and the Column player $0$ if both chose left.
We follow \cite{adv} and assume that at any given play the Column player follows one of two strategies: the Nash-equilibrium strategy of choosing either left or right with $50 \%$ probability or the Gamble-safe strategy, where they only choose right. The player will pick either strategy based on a multinomial logistic model, where the probabilities depend on the last two plays of the Row player, and the last strategy chosen by the Column player.
This results in a Partially Observed Markov Decision Process with the strategy employed being a hidden state giving rise to observed plays.

\begin{table}[!]
\begin{center}
\begin{tabular}{|c  |c| c|}
\hline
& Left & Right \\
\hline
Left  & 2,0 & 0,1\\
\hline
Right & 1,2 & 1,1\\
\hline
\end{tabular}
\end{center}
\caption{Rewards in the Gamble Safe game. The first number is the reward for the Row player and the second number the reward for the Column player, given a certain outcome. }
\label{gamble}
\end{table}

Let $s_t$ denote the strategy chosen by the Column player at time $t$ and $u_t$ denote the action played by the Row player at time $t$. Let $s_t=1$ if the Nash-equlibrium is chosen, $s_t=-1$ if the Gamble-safe strategy is chosen. Also let $u_t=1$ if the Row player plays right, $u_t=-1$ if he plays left. Similarly $y_t$ will denote the plays of the Column player. The strategy $s_{t+1}$ chosen at time $t+1$ will then be chosen according to
\[
P(s_{t+1}=1)= \frac{e^{x_t}}{1+e^{x_t}} \textrm{   and  } P(s_{t+1}=-1)= \frac{1}{1+e^{x_t}}
\]
where we let
\[
x_t=1.2u_t+u_{t-1}+\theta s_t
\]

The experiment is set up with two natural strategies for the Column player and we can think of $\theta$ as the persistence of strategies. The purpose of this experiment is to elicit information about how humans persist in strategy choice, and we therefore investigate how the plays of the Row player can be used to obtain an estimate of $\theta$ that is as precise as possible.

To cast this into a POMDP setting we think of $s_t$ being the unobserved underlying Markov Chain, $u_t$ as the control and $y_t$ as the observed process. Since the transition probabilities from $s_t$ depend on $u_{t-1}$ (a part of the history at time $t-1$) we augment the state space to include $u_{t-1}$, i.e. $r_t=(s_t,u_{t-1})$ will be our underlying Markov Chain. At this point we could run the dynamic programs for both FOFI and POFI, but controls calculated that way will depend deterministically on the plays of the Column player. Seeing that realistically deterministic plays can often easily be countered in adversarial games, it is better to follow a strategy that includes some randomness in the plays. So we let $w_t \in \{-1,1\}$ be the strategy of the Row player in such a way that

\[
\left.
\begin{array}{l l}
u_t = 1 & \text{w.p. $.8$} \\
u_t = -1&  \text{w.p. $.2$} \\
\end{array}\right\} \textrm{ if } w_t=1, \quad \textrm{ and } \quad
\left.
\begin{array}{l l}
u_t = 1 & \text{w.p. $.2$} \\
u_t = -1&  \text{w.p. $.8$} \\
\end{array}\right\} \textrm{ if } w_t=-1
\]
These kind of changes are easily incorporated in the dynamic program for both POFI and FOFI, by adding an expectation over $w_t$ at every step $t$.
%
%
%
%

We set $\theta=.7$ and calculated the FOFI and POFI policies with lags up to $m=7$. The long run FOFI policy was to set $w_t = -1$ if $u_{t-1}=1$ and $w_t = 1$ if $u_{t-1} = -1$ (no matter the state $s_t$) which indicates a preference for alternating the control at every step. Thus if $w_t=1$ is chosen at time $t$ and $u_t=1$ is sampled, then at time $t+1$ we set $w_{t+1}=-1$. The long run POFI policy with $m=1$ was to choose $w_t = 1$ if $y_t=1$ and $u_{t-1}=u_{t-2}=-1$ and $w_t = -1$ otherwise; policies with more lags were too complicated to list.

We ran simulation studies with $T=500$, and $1000$ simulations to compare the POFI and the FOFI policies. Another $1000$ simulations were run where the plays (control) where chosen randomly. The parameter $\theta$ was estimated using an EM algorithm. The results of this estimation under each policy are given in Table~\ref{adversarial results} where the POFI controls with $m>1$ produced the most stable results, but little additional improvement is seen for $m>3$. Table~\ref{adversarial results2} in the Supplementary Materials provides results for up to $m=7$.

\begin{table}
\begin{center}
\begin{tabular} {|c| c c c c|}
\hline
 & $m$ & bias & st. dev. & RMSE \\ 
 \hline
 FOFI & - & 0.0051  & 0.2675 & 0.2676 \\ 
 POFI & 0 & 0.0115 & 0.2717 & 0.2718 \\ 
  POFI & 1 & 0.0022 & 0.2712 & 0.2711 \\ 
 POFI & 2 & 0.0096 & 0.2565 & 0.2567 \\ 
  POFI & 3 & 0.0100 & 0.2489 & 0.25 \\ 
 Random & - & 0.0075 & 0.2667 & 0.2668 \\ 
 \hline
\end{tabular}
\end{center}
\caption{Simulation results for the adversarial game. The POFI control policy, with $m>1$, offers the best controls for optimizing the estimation of $\theta$ with MSE at or under $.6$}
\label{adversarial results}
\end{table}


%
%
%

 \section{Discretization methods} \label{sec:discretization}
 In order to apply the methods described above to more general dynamical systems, we need to approximate them by a suitable Partially Observed Markov Decision Process. We achieve this by discretizing time, state and observation spaces.
 In this paper, the continuous stochastic dynamical systems considered are of the form
\[
	d \xbold = \fbold(\xbold,\theta,u(t)) dt + \Sigma_1^{1/2} d\mathbf W
\]
where $\theta$ is the parameter of interest, to be estimated, $u(t)$ is a control that can be chosen by the user, $\xbold$ is the vector of state variables, $\fbold$ is a vector valued function and $\mathbf W$ a Wiener process.
The dynamical system is approximated on a fine grid of times $(t\delta)_{t=0,\ldots,T}$ and we obtain a discrete-time model
\[
\xbold_{t+1} = \xbold_t + \delta \fbold(\xbold_t,\theta,u_t) + \sqrt{\delta}
\epsilonbold_{1t}
\]
where $\epsilonbold_{1t} \sim N(0,\Sigma_1)$ are independent normal random variables. We assume the underlying state variables $x_t$ are only observed partially or noisily:
\[
\ybold_t = \gbold(\xbold_t) + \epsilonbold_{2t}
\]
where $\epsilonbold_{2t} \sim N(0,\Sigma_2)$.

In order to approximate this as a Markov Chain, the state space is discretized in each dimension and the model is then thought of as moving between the different boxes. The probability of moving from box to box is approximated using the normal p.d.f. at the midpoints of the boxes.
 In the examples in this paper, only equidistant discretization is considered, but this restriction can be readily removed. If we label the two midpoints as $i_1$ and $i_2$ and the area of the second box as $A_x$ this probability is given as
\begin{align*}
& p(x_{t+1} = i_2| x_t=i_1,u_t,\theta) \\
& \hspace{2cm} \propto \frac{\exp \left (-\frac{1}{2}(i_2-(i_1+\delta \fbold(i_1,\theta,u_t)))^T \Sigma_1^{-1} (i_2-(i_1+\delta \fbold(i_1,\theta,u_t)))\right)\cdot A_x}{(2\pi)^{k/2} \det(\Sigma_1)^{1/2}}
\end{align*}
where $k$ is the dimension of $\xbold$.
The probabilities are then normalized to make sure they sum to $1$.
If the controls $u_t$ can be chosen on a continuous scale then this scale has to be discretized as well. $(x_t,u_t)$ is then a Markov Decision Process, and one can run the FOFI dynamic program.

For the POFI dynamic program the observation space needs to be discretized as well. The probability of what observation box is observed depends on in which box the underlying Markov Chain is in. If we label the midpoint of the underlying Markov chain midpoint as $i$ and the midpoint of the observed process box midpoint as $j$, and the area of the latter box as $A_y$ this probability is given as
\[
p(y_t=j|x_t=i) \approx\frac{1}{(2\pi)^{k/2} \det(\Sigma_2)^{1/2}} \exp\left(-\frac{1}{2}(j-g(i))^T \Sigma_2^{-1} (j-g(i))\right)\cdot A_y
\]
These probabilities are also normalized to sum to $1$.
The process $(x_t,y_t,u_t)$ is now a Partially Observed Markov Decision Process and one can run the POFI dynamic program.

\subsection{Morris-Lecar Model} \label{subsec:ml}
The Morris-Lecar Model \cite{morris} describes oscillatory electric behavior in a single neural cell, as regulated by flow of Potassium and Calcium ions across the cell membrane.  The model is defined in terms of a voltage $v_t$ across the axon membrane and a gating variable $n_t$ that describes the fraction of Potassium channels that are open. A differential equation for these models is expressed as
\begin{align}
  C_m \dot{v}_t &= I_t - g_l\cdot(v_t-E_l) - g_K\cdot n_t\cdot(v_t-E_K) -
  g_{Ca}\cdot m_\infty(v_t)\cdot(v_t-E_{Ca}) \label{eq:ml-v}\\[2ex]
  \dot{n}_t &= -\phi\cdot (n_t - n_\infty(v_t)) / \tau_n(v_t) \label{eq:ml-n}
\end{align}
given in terms of auxiliary functions $m_\infty(v) = \frac12(1+\tanh((v-v_1)/v_2))$, $\tau_n(v) = {\rm sech}((v-v_3)/(2v_4))$ and $n_\infty(v) = \frac12(1+\tanh((v-v_3)/v_4))$.
We will write $C_m \dot{v}_t  = F_1(v_t,n_t)$ and $\dot{n}_t = F_2(v_t,n_t)$ as shortcuts equations (\ref{eq:ml-v}) and (\ref{eq:ml-n}).

In (\ref{eq:ml-v}-\eqref{eq:ml-n}), the first equation describes Kirchoff's current conservation law in which a current $I_t$ is injected into the neuron and the remaining terms represent the ``reversal potentials'' of each of a leakage current, Potassium and Calcium ions towards their equilibrium values respectively given by  $E_l, E_K$ and $E_{Ca}$. The conductance, $g_{Ca}$ of the Calcium channel is modified by a function of the current voltage $m_\infty(v)$, becoming more conductive (hence exerting a stronger influence on $v_t$) as $v$ increases. In contrast, the conductance of the Potassium channel changes dynamically as $n_t$ -- representing the number of open channels -- converges to its voltage-dependent equilibrium value of $n_\infty(v)$ more slowly than the rate of change of of $v$. For our purposes, we will be interested in the conductance, $g_{C_a}$, the relative speed, $C_m$, of the $v_t$ and $n_t$ variables, and $\phi$ -- the absolute speed of the $n_t$ dynamics. The remaining conductances $g_l$ and $g_K$ as well as the equilibria $E_l, E_k$ and $E_{Ca}$ could also be estimated, but were found in \cite{FOFI} to yield less interesting control policies.


In this model, when $I_t$ is large enough the neuron will ``spike'': producing rapid peaks in voltage that  stimulate connected neurons. A typical experiment involves applying a constant voltage and observing the neuron behavior. Here, we examine using $I_t$ as a control variable dynamically in order to maximize information about the parameters $C_m, g_{Ca}$ and $\phi$. We will examine experiments designed to target each parameter in turn because this is revealing about the sources of information for them, but a combined criterion such as the sum of their approximated Fisher Informations could readily be employed.

We consider a stochastic version of this neural firing model, derived from \cite{smith}, by adding $\sigma d w_1$ and $\tilde\sigma d w_2$ to equations (\ref{eq:ml-v}) and (\ref{eq:ml-n}) respectively, where $w_1$ and $w_2$ are independent Wiener processes.
 Stochastic models are important in this context in order to accommodate observable variation in the inter-spike interval where a deterministic model will require a fixed period;  see \cite{Hooker09}, for example.

The first step is to discretize these equations with respect to time. For a discrete time-step of size $dt$ we approximate $v_{t+dt} =v_t+dt \cdot F_1(v_t,n_t)/C_m + \sigma\sqrt{dt}\cdot\varepsilon_1$ and $  n_{t+dt} =n_t+dt\cdot F_2(v_t,n_t) + \tilde\sigma\sqrt{dt}\cdot\varepsilon_2$ where $\varepsilon_1,\varepsilon_2  \sim N(0,1)$.
%

We discretized $v_t$ onto the range $[-75,45]$ and $n_t$ onto $[0,1]$, after running a few trial versions of the model. Both ranges where discretized into $25$ intervals.
Only $v_t$ is measured and it is measured noisily,
$
 y_t=v_t +\varepsilon_{t},
$
where $\varepsilon_{t} \sim N(0,1)$. The observation space was discretized to the same range as $v_t$ but into $20$ intervals.
These approximations give rise to a Partially Observed Markov Decision Process to which our methods can be applied. FOFI and POFI (with $m=1$) controls were calculated for this experiment, targeting each of $C_m$, $g_{Ca}$ and $\phi$ in turn. The values for the parameters were set to be $C_m=20$, $g_{Ca}=4.4$, $g_l=2.0$, $E_k=-84.0$, $E_l=-60$, $E_{Ca}=120.0$, $\phi=.04$, $v_1=-1.2$, $v_2=18.0$, $v_3=2.0$, $v_4=30.0$, $\sigma =\tilde{\sigma}=1$ and $dt=1$. The controls range was set to be $[-1.5,6.0]$ and discretized to the set $I_t \in \{-1.5,0.0,1.5,3.0,4.5,6.0\}$. An example of  the control policy for $g_{Ca}$ is given in Figure~\ref{fig:morris-lecar} where we note that while the FOFI policy is described by the state variables $(v_t,n_t)$, the POFI policy is given in terms of observations $(y_t,y_{t-1})$.
A simulation study was run for each of the three parameters; the system was simulated within the discretized Markov Chain framework with $100$ time steps and all schemes had 100 simulations. The parameter in question was estimated for each simulation using an EM algorithm. As a baseline comparison we also ran a simulation study using a fixed control $I_t=1.5$ corresponding to experimental protocols typically used in practice. The results are given in Table~\ref{morris-lecar}. The difference between POFI and FOFI turns out to be not very dramatic, likely due to the observations providing a great deal of information about the underlying state variables; the scenario in which  FOFI performs well; see \cite{FOFI}. Extending this simulation to more than $m=1$ incurred significant computational costs; \cite{FOFI} reported little difference between estimation results for fully observed systems (for which FOFI is the Fisher Information) and where $v_t$ is observed noisily, suggesting that there is little additional information to be gained from further lags.

\begin{table}[!]
\begin{center}
\begin{tabular} {|c| c c c c|}
\hline
 parameter& & bias & st. dev. & RMSE \\ 
 \hline
 \multirow{3}{*}{$C_m$}  & FOFI &  .4234 &2.4722  & 2.5086 \\
 & POFI &.4129   &2.4068  &2.442 \\ 
 & Fixed & .9098& 3.4240  & 3.5427 \\ 
\hline
  \multirow{3}{*}{$g_{Ca}$}  & FOFI &.0613   & .3671  & .3722 \\ 
 & POFI &.0158   &.3706  & .3709 \\ 
 & Fixed & .0249 & .6193 & .62 \\ 
\hline
  \multirow{3}{*}{$\phi$} & FOFI &  .00485 & .01085 & .01183 \\ 
 & POFI &.00257   &.01037  & .0105 \\ 
 & Fixed & .01357& .02643 & .03 \\ 
 \hline
 \end{tabular}
 \caption{Simulation results for the Morris-Lecar model, consider the parameters $C_m, g_{Ca}, \phi$ separately. We see that the POFI and FOFI policies outperform the fixed policy $I_t=1.5$ in all cases, and the POFI policy seems to perform slightly better than the FOFI policy for the three parameters considered. }
 \label{morris-lecar}
 \end{center}
\end{table}

%
%

\begin{figure}
  \centering
  \subfloat[FOFI control]{\label{fig:gull}\includegraphics[scale=0.2]{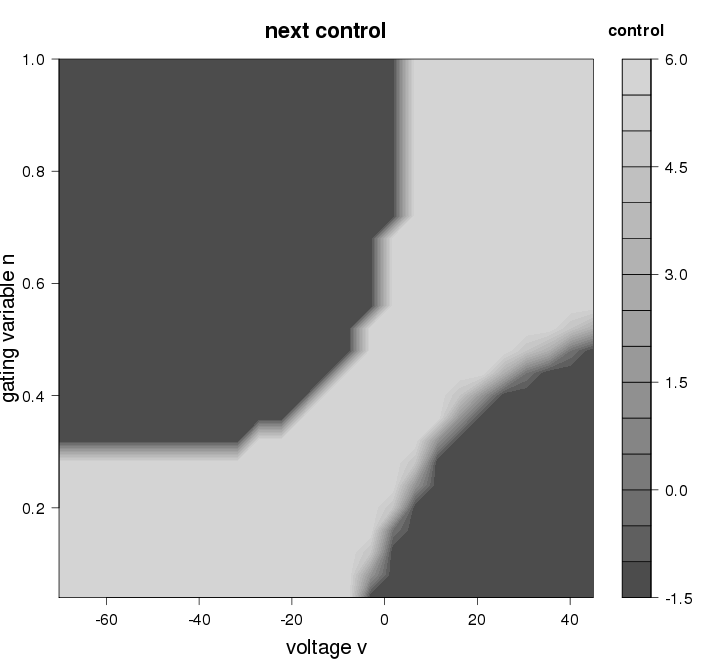}}
  \subfloat[POFI control]{\label{fig:tiger}\includegraphics[scale=0.2]{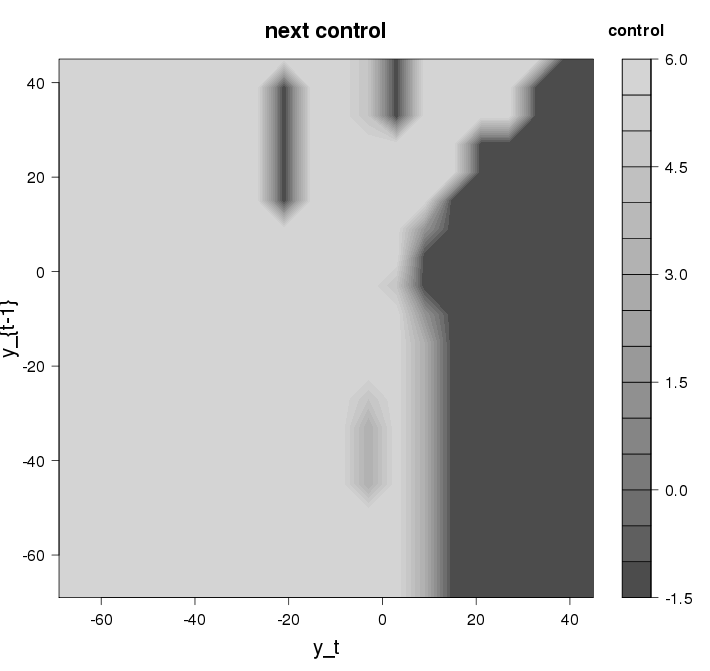}}
  \caption{Long term controls of FOFI and POFI for the parameter $g_{Ca}$. The FOFI plot gives the control to use, given a certain position in state space. The POFI control will depend on the last two observations and the last control, but fixing the last control as, for example, $I_{t-1}=6$ one can plot which control to use given combinations of the last two observations. }
  \label{fig:morris-lecar}
\end{figure}

\section{Parameter dependence of dynamic program} \label{sec:paramdependence}
In the discussion above we calculated the dynamic program assuming knowledge of the parameter $\theta$, the very thing we wish to estimate with maximal precision. Since the dynamic programs we have considered are run before the experiment is started, we generally won't have data to estimate $\theta$. Additionally, for the FOFI simulations we have used $\theta$ directly to estimate $x_t$ within the filter to get the appropriate control, but this will not be possible in practice.
There are a few ways of dealing with this.

Assuming some prior information one can use a prior for $\theta$ to run the dynamic program. To do this, we add one more expectation for $\theta$ at every time step $t$, and then maximize the expected Fisher Information to get the best control.
This strategy was employed in \cite{FOFI}. Experiments with controls derived from this policy for both POFI and FOFI are reported in Table~\ref{chemostat} along with those for the methods we propose below.

The rather obvious deficiency with averaging over a prior, for either POFI or FOFI, is that as the experiment runs, we get observations that can be used to improve our prior for $\theta$, and could be used to get better controls, if we could brake the experiment and rerun the dynamic program.

\subsection{Online updating}
In some systems the time spent in each state is very short, too short to perform many calculations, making it valuable to have a ``look-up table" of controls. Here the POFI controls have an advantage over the FOFI controls, in the sense that they are of the ``look-up" kind, as FOFI requires estimation of the underlying $x_t$ process, before the control can be looked up.

In other systems, there is time to do some calculations between transitions. Note, for example, that at time $t$ we have observed $y_0,\ldots, y_t$  and this will allow us to calculate a posterior distribution $\pi(\theta|y_{0:t}, u_{0:(t-1)})$ for our parameter of interest. This posterior could then be used to run the dynamic program again, as described above, from time $T-1$ to time $t$. This can be quite time consuming if done at each time step $t$, so we propose a method that relies on the Value Iteration Algorithm.

\subsection{Value Iteration Algorithm} \label{sec:via}
A popular algorithm from the theory of Markov Decision processes is the Value Iteration Algorithm (VIA), see  \cite{puter}. The theoretical motivation of VIA is similar to dynamic programming, but here the objective is to maximize an expected total reward $W_1$ that has a discounting factor $\lambda$, where $0 \leq \lambda < 1$, and the time horizon is assumed to be infinite;
\[
W_1=E\left [ \sum_{t=0}^{\infty} \lambda^{t-1} C(x_t,u_t)\right]
\]
and $W_1$ is labeled as the expected total discounted reward. $W_1$ exists if $C$ is bounded, which is the case in the problems we consider.
In \cite{puter} it is shown that an optimal control policy exists and it can be chosen to be time independent, i.e. to depend only on the state $x_t$, and not the time $t$. Moreover this optimal control can be approximated using the Value Iteration Algorithm described below. Our experimental setting is neither discounted nor has it an infinite time horizon, but Blackwell optimality guarantees that  controls that maximize $W_1$ also maximize the expected average reward $W_2$ (or its $\limsup$ if the limit doesn't exist);
\[
W_2 = \lim_{n\rightarrow \infty} \frac{1}{n} E\left[ \sum_{t=0}^{n} C(x_t,u_t)\right]
\]
given that $\lambda$ is chosen close enough to one. A Blackwell optimal control policy exists if the state and action spaces are finite, which is the case in our setting. Maximizing $W_2$ effectively amounts to maximizing the average input of each observation in our Fisher Information; a reasonable strategy. How small $1-\lambda$ needs to be is generally hard to determine, and choosing $\lambda$ too high will cause the algorithm to converge slowly. See \cite{puter} Chapter $10$ for more on Blackwell optimality.
In VIA we calculate
\[
v^{n+1}(x_t,\theta) = \max_{u} \left\{ C(x_t,u_t, \theta) +\lambda\cdot E_{x_{t+1}}[ v^n(x_{t+1}, \theta)|x_t,u_t,\theta ]   \right \}
\]
in a while-loop until $v^n$ converges to some fixed point, within some tolerance. Convergence is guaranteed since each iteration of $v^n$ is a contraction mapping.

%

Our aim with VIA is to maximize  the average Partial Observation Fisher Information
\[
\lim_{n \rightarrow \infty}\frac{1}{n} E_{\theta}E_{{\bf y}|\theta}\sum_{t=0}^{n}\left( \frac{\partial}{\partial \theta} \log p(y_{t+1}| y_{(t-m):t},u_{(t-m):t},\theta)\right)^2
\]
in order to obtain a time-invariant policy. In the FOFI case, this can be replaced by the Full Observation Fisher Information.


We propose running VIA at every time step $t$, but to use the posterior for $\theta$, $\pi(\theta|y_{0:t}, u_{0:(t-1)})$, which is conditioned on all the data observed so far, instead of using the prior for $\theta$. This will give a control that maximizes the average Fisher Information, using all the parameter information that is available at time $t$.
Instead of starting VIA at each time $t$ with $v^1=0$, considerable time can be saved by using the last value vector $v^n$ from the previous run of VIA at time $t-1$. This is because the posterior for $\theta$ often doesn't change much between time steps, and the last $v^n$ from time $t-1$ thus being relatively close to the fixed point at time $t$.

 The pseudocode for this modified VIA using POFI is provided in Appendix \ref{sec:pseudovia}.
Updating FOFI policies online using VIA can be done in a similar way. In the next example we compare FOFI and POFI both employing an expectation over the prior and using the VIA updates.

\subsection{PCR Model} \label{subsec:pcr}

Polymerase chain reaction is a well established method to copy and multiply DNA. We are interested in modeling the growth dynamics of DNA template ($x_t$), for a fixed amount of substrate. The model we use is
\[
x_{t+1}=(1-u_t)x_t +dt \frac{a(1-u_t) x_t}{(b+(1-u_t)x_t)^2} +\sqrt{dt}\cdot \varepsilon_1
\]
 where $\varepsilon_1 \sim N(0,\sigma_1^2)$. Here $x_t$ is the amount of DNA template, $a$ and $b$ the parameters of the model and $u_t$ the control, the percentage of template removed at each time point. We are interested in estimating the parameter $b$, labeled the half-saturation constant. A good reference for PCR models is \cite{haccou2005branching}.

We measure the amount of DNA template at each time point, but with an error. Our observations are
$
y_t = x_t +\varepsilon_2$ where  $\varepsilon_2 \sim N(0,\sigma_2^2)
$
and thus we have a dynamical system which when discretized becomes a Partially Observed Markov Decision Process.

The range for $x_t$ was set to be $[0,15]$ and then discretized into $200$ intervals, and $y_t$ was discretized to the same range, but only into $50$ intervals. The parameter values were set to be $a= 2.0$, $b=4.2$, $\sigma_1=\sigma_2=1$, $dt=1$ and the possible values of the control $u_t \in \{0,.2,.4,.6,.8,1\}$.

Still with the objective of maximizing Fisher Information, we more realistically assumed priors for the parameters of the system, as discussed above.
We conducted a simulation study using controls based on these priors for both FOFI and POFI with $m=1$, and then compared their performance to controls that are updated online using VIA, also both for POFI and FOFI. As a baseline comparison we also ran simulations using fixed controls and simulations where the true parameter is used (unrealistically) to calculate the control policy via dynamic programming as in the previous examples. For fixed controls we report the simulation with the lowest MSE, which was when $u_t = .2$.

The range for $b$ was set to be $b \in [1.7 , 8.0]$ and then we discretized that interval into 10 points $\{1.7,2.4,3.1,3.8,4.5,5.2,5.9,6.6,7.3,8.0\}$. We then considered a uniform prior on these points with a prior that  puts the weight $.9$ on the point $7.3$ and gives the others equal weight. This second ``inaccurate'' prior is intended to demonstrate the benefits of updating our knowledge of $b$ as the experiment progresses.  The discounting factor for VIA was set to be $\lambda=.9$.

Table~\ref{chemostat} reports the result of a simulation study that compares
\begin{itemize}
\item A fixed control of $u_t = 0.2$ for all times $t$,
\item Control policies calculated for the POFI and FOFI objectives averaged over uniform prior for $b$ obtained prior to the experiment,
\item Control policies calculated for the POFI and FOFI objectives averaged over the ``inaccurate'' prior,
\item  POFI and FOFI strategies averaged over a posterior that is updated using  VIA  as the experiment progresses, starting from a uniform prior,
\item POFI and FOFI strategies updated using the VIA starting from the inaccurate prior and
\item control policies obtained using the true parameter values for $b$.
\end{itemize}
Our simulation study had the time length $T=200$ and there were $600$ simulations for each case. The parameter $b$ was estimated using an EM algorithm. 

\begin{table}
 \begin{center}
  \begin{tabular}{c c}
 \begin{tabular} {|c| c c c|}
 \hline
\multicolumn{4}{|c|}{uniform prior, without VIA} \\
  \hline
\multicolumn{4}{|c|}{fixed control ($u_t=.2$ for all $t$) } \\
 \hline
  & bias & st. dev. & RMSE \\ 
 fixed &  .1264 & .7466 & 0.7572 \\ 
 \hline
 \hline
 & bias & st. dev. & RMSE \\ 
 FOFI & 0.1059 & 0.6598 & 0.6682 \\ 
POFI & 0.0053& 0.6189 & 0.617 \\ 
  \hline
   \hline
\multicolumn{4}{|c|}{inaccurate prior, without VIA}\\
  \hline
 & bias & st. dev. & RMSE \\ 
FOFI& 0.0755& 0.6374 & 0.6419 \\ 
POFI & 0.0516& 0.7051 & 0.7063 \\ 
\hline
\end{tabular}
&
 \begin{tabular} {|c| c c c|}
  \hline
\multicolumn{4}{|c|}{uniform prior, with VIA}\\
 \hline
 & bias & st. dev. & RMSE \\ 
FOFI & 0.0388 & 0.6180 & 0.6192 \\ 
POFI & 0.0766 & 0.5999 & 0.6048 \\ 
  \hline
   \hline
\multicolumn{4}{|c|}{inaccurate prior, with VIA}\\
  \hline
 & bias & st. dev. & RMSE \\ 

FOFI& 0.0713 & 0.6787 & 0.6824 \\ 
POFI & 0.0954 & 0.6750& 0.6818 \\ 
  \hline
  \hline
\multicolumn{4}{|c|}{True parameter, without VIA}\\
  \hline
 & bias & st. dev. & RMSE \\ 
FOFI & 0.0659& 0.6235 & 0.6271 \\ 
POFI & 0.0323& 0.6249 & 0.6258 \\ 

   \hline
 \end{tabular}
 \\
\end{tabular}
\caption{Simulation results for the PCR Model using two kinds of priors, POFI and FOFI and with and without VIA.}
 \label{chemostat}
 \end{center}
\end{table}

We note that when we calculate the controls prior to the experiment (No online updating), both the POFI and FOFI controls are significantly better than using a fixed control, and POFI seems to do better than FOFI when we use an uniform prior. Interestingly in the FOFI case, calculating the controls using the inaccurate prior does better then using the uniform prior, likely due to a reduction in prior variance, in spite of additional bias.

Accuracy increases in most cases when we allow for online updating using the VIA algorithm. Starting the VIA with an uniform prior does better than starting with the inaccurate one, which is probably due to the VIA having to spend more time ``repairing" the prior. Also, we note that VIA controls with uniform prior have a similar performance to a control policy using the true (unknown) parameter.

Additionally, in Figure~\ref{VIA_time} in Appendix \ref{sec:viaperform}, we see that using the previous final value vector as the starting value vector of VIA when going from time point $t$ to $t+1$, does save considerable time, and more so as $t$ grows and the posterior for the parameter starts to change less.

%

\section{Discussion}

In this paper we compared two possible ways to conduct experimental design in parametric POMDP's, based on using dynamic programming to maximize either the Partial Observation Fisher Information or the Full Observation Fisher Information.
Settings can arise where controls chosen by FOFI are not optimal, due to focusing on the underlying process rather than the observed process, and in these cases controls chosen with POFI often perform better, as in the six state example and the adversarial game; in other examples analyzed they performed similarly.

In recent years, there has been growing interest in statistical procedures within dynamical systems, such as parameter estimation and hypothesis testing, and many of these procedures could be performed more efficiently given good experimental design. In this paper we fully discretized the state and observational spaces to transform dynamical systems with stochastic errors into partially observed Markov decision processes, allowing us to use the methods developed for POMDP's to our advantage.

We also noted how the problem of parameter dependence can be overcome by averaging over a prior. Additionally given that there is enough time between consecutive time steps, we showed how the controls can be efficiently updated online using observations gathered so far, by using a variant of the Value Iteration Algorithm. This was demonstrated in the PCR example.

There remain many open challenges in experimental design for nonlinear processes. The methods we present are based on discretizing continuous state and observation quantites, which limits the dimension of the state variables. Extending our methods to higher-dimensional systems, or to incorporate more than lags for larger numbers of states could be approached using the techniques of approximate dynamic programming, see \cite{powell}. We have focussed on designing control variables, but other design quantities such as the timing or type of observations can be important. Finally, we have focussed only one particular design objective within the framework of Fisher Information. Criteria such as the trace of the Fisher Information can be targeted in our framework when incorporating multiple parameters. Mutual Information was explored in \cite{iolov2017} for a particular system (also with one parameter of interest).  Other targets such as the power of a test or in model selection (for example, those in \cite{hooker2015}) have yet to be investigated.



\subsection*{Acknowledgement}
This work was partially supported by NSF grants DMS-1053252 and DEB-1353039.

\bibliographystyle{chicago}
\bibliography{mybib}

\appendix

\section{Computing and Pseudocode}  \label{app:A}
\subsection{Dynamic program for approximated POFI}
\label{sec:dyn POFI}
We maximize
\[
POFI_m = E\sum_{t=0}^{T-1}\left(\frac{\partial}{\partial \theta} \log p(y_{t+1}|y_{(t-m):t},u_{(t-m):t},\nu_{t-m},\theta)\right)^2
\]
with reward function 
$C(y_{(t-m):t},u_{(t-m):t} )= \left(\frac{\partial}{\partial\theta} \log p(y_{t+1}|y_{(t-m):t},u_{(t-m):t} ,\theta)\right)^2$ and Partial Observation Fisher Information To Go
\[
POFI_{t,m} (y_{(t-m):t},u_{(t-m):(t-1)}) = \max_{u_t} \left\{ C(y_{(t-m):t},u_{(t-m):t})+E_{y_{t+1}}[POFI_{t+1,m}(y_{(t-m):t},u_{(t-m):t} ,\theta) ]  \right \}
\]
 The pseudocode for this dynamic program is:\\

\begin{algorithmic}
\STATE $POFI_{T,m}=  0$
\FOR{$t=(T-1) \to 0$}
\STATE $\forall$ $y_{(t-m):t},u_{(t-m):(t-1)}$ and calculate and store
\STATE $POFI_{t,m}(y_{(t-m):t},u_{(t-m):(t-1)},\theta) = \max_{u_t} \left\{ E_{y_{t+1}}[C + POFI_{t+1,m}|y_{(t-m):t},u_{(t-m):t},\theta ]   \right \}$
\STATE $u_t^*(y_{(t-m):t},u_{(t-m):(t-1)},\theta)=\argmax_{u_t}  \left\{ E_{y_{t+1}}[C +POFI_{t+1,m}|y_{(t-m):t},u_{(t-m):t},\theta ]   \right \}$
\ENDFOR
\end{algorithmic}

\subsection{Dynamic program for FOFI}
\label{sec:dyn FOFI}
We maximize
\[
FOFI=E\sum_{t=0}^{T-1}\left(\frac{\partial}{\partial \theta} \log p(x_{t+1}|x_t,u_t,\theta)\right)^2
\]
by setting the reward function as $C(x_t,u_t)=\left(\frac{\partial}{\partial \theta} \log p(x_{t+1}|x_t,u_t,\theta)\right)^2$.
The pseudocode for this dynamic program is:\\

\begin{algorithmic}
\STATE $FOFI_T=  0$
\FOR{$t=(T-1) \to 0$}
\STATE $\forall$ $x_t$ and calculate and store
\STATE $FOFI_t(x_t) = \max_{u_t} \left\{ C(x_t,u_t,\theta) + E_{x_{t+1}}[FOFI_{t+1}(x_{t+1}, \theta)|x_t,u_t,\theta ]   \right \}$
\STATE $u_t^*(x_t)=\argmax_{u_t}  \left\{C(x_t,u_t,\theta) +  E_{x_{t+1}}[FOFI_{t+1}(x_{t+1},\theta)|x_t,u_t,\theta ]   \right \}$
\ENDFOR
\end{algorithmic}

\subsection{Value Iteration Algorithm Pseudocode} \label{sec:pseudovia}

The code below provides a formal algorithm for the modified VIA employed in Section \ref{sec:via}.

Let $v_t^n$ denote the value vector at time $t$ at the $n$'th iteration of the $t$'th VIA and let \mbox{$\pi(\theta|y_{0:t}, u_{0:(t-1)})$} denote the posterior for $\theta$ given observations up till time $t$. Also, to ease notation, let ${\bf z_t} = y_{(t-m):t},u_{(t-m):(t-1)}$. Then

 \vspace{4 mm}
\begin{algorithmic}
\STATE Set $v_1^0 = 0$ and $n=0$
\FOR{$t=0 \to T$}
\WHILE{$||v^n-v^{n-1}|| >\varepsilon$}
\STATE $\forall$ ${\bf z_t}$ and calculate and store
\STATE $  v_t^{n+1}({\bf z_t})=$
\[ \max_{u_t}  \sum_{\theta} \sum_{ y_{t+1}}\left[\left( \frac{\partial}{\partial \theta} \log p(y_{t+1}|{\bf z_t},u_t,\theta)\right)^2 + \lambda v_t^n({\bf z_{t+1}}))p(y_{t+1}|{\bf z_t},u_t,\theta)\pi(\theta|y_{0:t}, u_{0:(t-1)})\right]
\]
\STATE n=n+1
\ENDWHILE
\STATE Set $v_{t+1}^0=v_t^n$
\STATE Now let
\STATE $ u_t({\bf z_t})=$
\[ \argmax_{u_t}  \sum_{\theta} \sum_{y_{t+1}}\left[\left( \frac{\partial}{\partial \theta} \log p(y_{t+1}|{\bf z_t},u_t,\theta)\right)^2 + \lambda v_t^n({\bf z_{t+1}}))p(y_{t+1}|{\bf z_t},u_t,\theta)\pi(\theta|y_{0:t}, u_{0:(t-1)})\right]
\]
\STATE Use control $u_t$, and observe $y_{t+1}$ and then update the posterior for $\theta$,
\[
\pi(\theta|y_{0:(t+1)}, u_{0:t})=\frac{p(y_{t+1}|y_{0:t},u_{0:t},\theta)\pi(\theta|y_{0:t}, u_{0:(t-1)})}{\sum_{\theta} p(y_{t+1}|y_{0:t},u_{0:t},\theta)\pi(\theta|y_{0:t}, u_{0:(t-1)})}
\]
\ENDFOR
\end{algorithmic}
\vspace{4 mm}

\subsection{Computational Performance} \label{app:B1}

We analyze the computational complexities of FOFI and POFI. The computations required can be split into computations done prior to the experiment, and computations that are required while running the experiment. A direct comparison is not completely fair since FOFI requires computations at runtime while POFI does not as discussed below.
\subsubsection{FOFI}
Prior to the experiment we use a dynamic program that provides us with a control-policy that maximizes
\[
E\sum_{t=0}^{T-1}\left(\frac{\partial}{\partial \theta} \log p(x_{t+1}|x_t,u_t,\theta)\right)^2,
\]
i.e. the Full Observation Fisher Information.

We assume that the transition probability matrix $p(x_{t+1}|x_t, u_t,\theta)$ is given. Calculating $ \left(\frac{\partial}{\partial \theta} \log p(x_{t+1}|x_t, u_t,\theta)\right)^2$ is negligible compared to the calculations required for the dynamic program. If we set
\[
g_t(x_t,x_{t+1},u_t,\theta) =\left(\frac{\partial}{\partial \theta} \log p(x_{t+1}|x_t, u_t,\theta)\right)^2
\]
then for a given time $t$ in the dynamic program we need to maximize
\[
 E\left[g_t(x_t,x_{t+1},u_t,\theta) +V_{t+1}(x_{t+1},\theta)\middle| x_t\right]
\]
over $u_t \in \mathcal{U}$ for each $x_t \in \mathcal{X}$, where $V_{t+1}$ is the value function from the previous step $t+1$.
This calculation requires adding $g_t$ and $V_{t+1}$ which are two $K^{\times 2}\times l$ tensors with cost $K^2l$. Next we need a dot product between $g_t+V_{t+1}$ and $p(x_{t+1}|x_t,u_t)$ over the $x_{t+1}$ dimension which has cost $O(K^2l)$. Finally maximizing over $u_t$ for each $x_t$  has cost $O(Kl)$. Thus each step $t$ has cost $O(K^2 l)$ and the dynamic program in total has cost $O(TK^2l)$.

When running the experiment, a filter is required to estimate the state $x_t$. The filter for time $t+1$ can be calculated via the following recursive formula:
\[
 p(x_{t+1}|y_{0:(t+1)},u_{0:t}) \propto \sum_{x_t} p(y_{t+1}|x_{t+1})p(x_{t+1}|x_t,u_t)p(x_t|y_{0:t},u_{0:(t-1)})
\]
and then normalizing. This requires $2K$ dot products of vectors of length $K$, with cost $O(K^2)$ and the normalization has cost $O(K)$. Thus we have $O(K^2)$ computations at each time step $t$ during runtime.

\subsubsection{POFI}
Here the dynamic program maximizes the approximated Partial observation Fisher Information,
\[
E\left[\sum_{t=0}^{T-1}\left(\frac{\partial}{\partial\theta} \log p(y_{t+1}| y_{(t-m):t},u_{(t-m):t},\theta) \right)^2  \right ]
\]
First we note that $ p(y_{t+1}|y_{(t-m):t},u_{(t-m):t},\theta)$ is a $L^{\times m+2} \times l^{\times m+1}$ tensor, and it can be calculated using Bayes rule at the cost $O(K^2 L^{m+2} l^{m+1})$. Calculating $\frac{\partial}{\partial \theta} \log p(y_{t+1}|y_{(t-m):t},u_{(t-m):t})$ can also be done at the cost $O(K^2 L^{m+2} l^{m+1})$, but can also be effectively approximated using the finite difference approximation to the derivative.

The cost analysis of the POFI dynamic program is just like the analysis of FOFI. At a given time $t$ adding $g_t$ and $V_{t+1}$ has cost $O(L^{m+2}l^{m+1})$, the dot product between $g_t + V_{t+1}$ and $ p(y_{t+1}|y_{(t-m):t},u_{(t-m):t})$ has cost $O(L^{m+2}l^{m+1})$ and the maximization has cost $O(L^{m+1}l^{m+1})$.

The dynamic program thus has cost $O(TL^{m+2}l^{m+1})$, which we kept from growing to large by choosing $L$ significantly lower than $K$ and $m=1$ or $m=2$.

\section{Further Experimental Results}

\subsection{Further Results on Discrete Systems}

Table \ref{6 state POFI lag 3} presents a tabulation of the POFI ($m=2$) control policy for the 6-state example.  Table \ref{adversarial results2} presents extended results for the Gamble-Safe game example including POFI with $m$ up to 7.

\begin{table}
\begin{center}
\begin{tabular} {|c || c c c c c c c c c c  c c c c c c |}
\hline
$u_t$  & 1&-1 &1 &-1& -1&-1&-1&1& 1& -1& 1&-1&-1 &-1&-1&1\\

\hline
\hline
$z_t$       &1 & 2  &  1 &  2  & 1 & 2  & 1  &  2   &1 & 2  &  1 &  2  & 1 & 2  & 1  &  2\\
$u_{t-1}$ &1 &1  & -1 &  -1  & 1 &1 & -1 & -1  &1 & 1  &  -1 &  -1  & 1 & 1  & -1  &  -1  \\
$z_{t-1}$ &1 &1  & 1 &  1  & 2 & 2  &  2  &  2  &1 & 1  &  1 &  1  & 2 & 2  & 2  &  2 \\
$u_{t-2}$ &1 &1  & 1  &  1  & 1 & 1  &  1  &  1  &-1 &-1  & -1  &  -1  & -1 & -1  &  -1  &  -1  \\
$z_{t-2}$ &1 &1  & 1  &  1  & 1 & 1  &  1  &  1  &1 & 1  &  1 &  1  & 1 & 1  & 1  &  1 \\

\hline

\hline

$u_t$  &-1 &-1 &1&-1&-1 &1&-1&1&-1&-1 &1 &-1& -1&1&-1&1\\
\hline
\hline

$z_t$        &1 & 2  &  1 &  2  & 1 & 2  & 1  &  2   &1 & 2  &  1 &  2  & 1 & 2  & 1  &  2\\
$u_{t-1}$ &1 &1  & -1 &  -1  & 1 &1 & -1 & -1  &1 & 1  &  -1 &  -1  & 1 & 1  & -1  &  -1  \\
$z_{t-1}$ &1 &1  & 1 &  1  & 2 & 2  &  2  &  2  &1 & 1  &  1 &  1  & 2 & 2  & 2  &  2 \\
$u_{t-2}$ &1 &1  & 1  &  1  & 1 & 1  &  1  &  1  &-1 &-1  & -1  &  -1  & -1 & -1  &  -1  &  -1  \\
$z_{t-2}$ &2 & 2  & 2  &  2 & 2 & 2  &  2  &  2  & 2 & 2  &  2 &  2  & 2 & 2  & 2  &  2 \\

\hline
\end{tabular}

\end{center}
\caption{Long run control policy that results from using POFI with $m=2$ in the 6 state example. The first row describes which control to use for a given history $(z_t, u_{t-1}, z_{t-1}, u_{t-2}, z_{t-2})$ of observations and control. We see that if $z_t = z_{t-1}$ and ($z_{t-1} = z_{t-2}$ or $u_{t-1} = -1$) then $u_t = 1$, else $u_t = -1$.
}
\label{6 state POFI lag 3}
\end{table}

\begin{table}
\begin{center}
\begin{tabular} {|c| c c c c|}
\hline
 & m & bias & st. dev. & RMSE \\ 
 \hline
 FOFI & - & 0.0051  & 0.2675 & 0.2676 \\ 
 POFI & 0 & 0.0115 & 0.2717 & 0.2718 \\ 
  POFI & 1 & 0.0022 & 0.2712 & 0.2711 \\ 
 POFI & 2 & 0.0096 & 0.2565 & 0.2567 \\ 
  POFI & 3 & 0.0100 & 0.2489 & 0.25 \\ 
 POFI & 4 & 0.0223 & 0.2428 & 0.2439 \\ 
 POFI & 5 & 0.0094 & 0.2448 & 0.2449 \\ 
 POFI & 6 & 0.0051 & 0.2418 & 0.2419 \\ 
 POFI & 7 & 0.0027 & 0.2419 & 0.2419 \\ 
 Random & - & 0.0075 & 0.2667 & 0.2668 \\ 
 \hline
\end{tabular}
\end{center}
\caption{Simulation results for the adversarial game. The POFI control policy, with $m>1$, offers the best controls for optimizing the estimation of $\theta$ with MSE at or under $.6$}
\label{adversarial results2}
\end{table}

\subsection{Practical Computational Performance for VIA} \label{sec:viaperform}

Figure \ref{VIA_time} plots the running time of each step of the VIA algorithm in the PCR model; as the algorithm progresses, update-time steadily reduces as parameter estimates stabilize.

\begin{figure}
\centering
\includegraphics[scale=0.31]{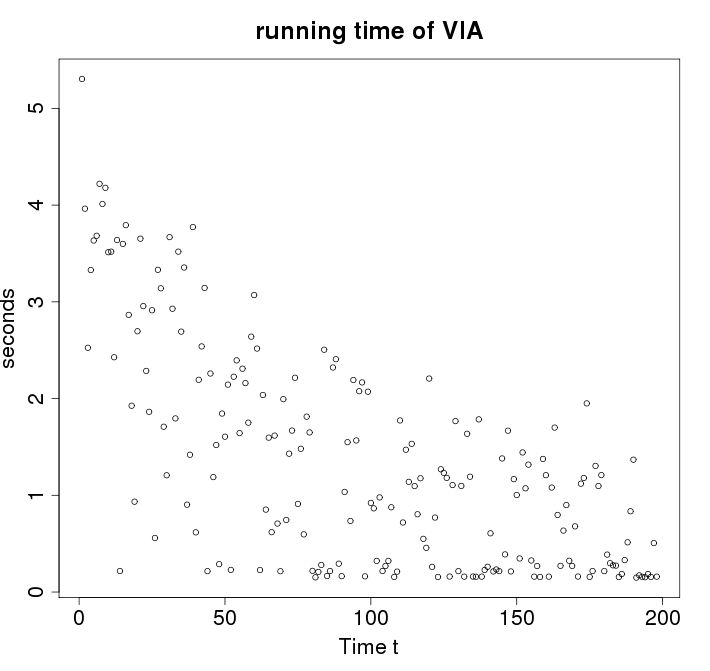}
\caption{Running time of VIA at each time step $t$, for POFI using a uniform prior for the PCR model.}
\label{VIA_time}

\end{figure}

\section{Theoretical Identities and Proofs}

\subsection{Expressing Fisher Information}
\label{sec:express FI}


In this section we find useful expressions for the Fisher Information of an experiment, that are needed to derive convergence arguments and set up dynamical programs.
We use the short hand notation
\[
 h_{k}(\theta) = \log p(y_k|y_{0:(k-1)},u_{0:(k-1)},x_0=x)
\]
where the dependence on $x_0=x$ is suppressed.

For data $y_0,\ldots, y_T$ the Fisher Information for $\theta$ can be expressed in one or two derivatives
\[
 FI =  E\left[\sum_{t=0}^{T-1} -\ddot{h}_{t+1}\right]=  E\left[\left(\sum_{t=0}^{T-1} \dot{h}_{t+1}\right)^2 \right]
\]
In order for these expressions to be well defined, we require some regularity conditions of the model. Within the POMPD framework we employ here, the expectation is taken over a finite state and observation space. Applying chain rule to $\log h_k(\theta)$ these hold provided the following conditions hold:
\begin{enumerate}
\item $p(y_t=y^i|x_{t-1}=x^j,u_{t-1} = u^r; \theta)>0$ for all $y^i \in \mathcal{Y}$, $x^j \in \mathcal{X}$ and $u^r \in \mathcal{U}$ at $\theta$.

\item $p(y_t=y^i|x_t=x^j; \theta)$ and $p(x_{t+1} = x^k|x_{t}=x^l,u_t=u^r,\theta)$ are both twice continuously differentiable at $\theta$ for each $y^i$, $x^j$, $x^k$, $x^l$ and $u^r$.
\end{enumerate}
The first of these is somewhat stronger than Assumption \ref{mixing} in which the support of $p(y_t|x_{t-1},u_t,\theta)$ can depend on $u_t$ (for this to happen, the support of $y_t$ must depend on $x_t$).

These conditions hold for the discrete implementation of all of our examples. For continuous state and observation spaces, we would need further regularity, not only to interchange integration and differentiation, but to also assume that $x_t$ cannot diverge too quickly. Rigorously pursuing such conditions is beyond the scope of this paper.

We can similarly define the Fisher Information to Go as
\[
 FI_k(y_{0:k},u_{0:(k-1)}) = E\left[\sum_{t=k}^{T-1} -\ddot{h}_{t+1}\middle|y_{0:k},u_{0:(k-1)}\right]
 = E\left[\left(\sum_{t=k}^{T-1} \dot{h}_{t+1}\right)^2  \middle|y_{0:k},u_{0:(k-1)}\right]
\]
where the latter equality is conditional on the Fisher Information being well defined as described above. We set $FI_0 = FI$.

The Fisher Information to Go can be calculated recursively (in both one or two derivatives):
\begin{mylemma}
\[
 FI_k(y_{0:k},u_{0:(k-1)}) = E\left[ -\ddot{h}_{k+1} + FI_{k+1}\middle|y_{0:k},u_{0:(k-1)}\right] = E\left[\left( \dot{h}_{k+1}\right)^2+FI_{k+1}\middle|y_{0:k},u_{0:(k-1)}\right]
\]
\label{FI recursive}
\end{mylemma}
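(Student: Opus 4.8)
The plan is to read off each recursion directly from the two expressions for $FI_k$ given in the preceding display, using the additive structure of the sums together with the martingale property of the conditional score; no machinery beyond the tower property of conditional expectation and differentiation under the integral sign should be required.

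First I would establish the two-derivative recursion. Starting from $FI_k = E[\sum_{t=k}^{T-1}-\ddot h_{t+1}\,|\,y_{0:k},u_{0:k-1}]$, I peel off the $t=k$ summand and rewrite the remainder by inserting the finer history:
\[
E\Big[\textstyle\sum_{t=k+1}^{T-1}-\ddot h_{t+1}\,\Big|\,y_{0:k},u_{0:k-1}\Big]
= E\Big[\,E\big[\textstyle\sum_{t=k+1}^{T-1}-\ddot h_{t+1}\,\big|\,y_{0:k+1},u_{0:k}\big]\,\Big|\,y_{0:k},u_{0:k-1}\Big]
= E\big[FI_{k+1}\,\big|\,y_{0:k},u_{0:k-1}\big],
\]
where the inner conditional expectation is exactly the definition of $FI_{k+1}$ and the outer equality is the tower property, legitimate since $\sigma(y_{0:k},u_{0:k-1})\subseteq\sigma(y_{0:k+1},u_{0:k})$. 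Combined with the isolated $t=k$ term this yields $FI_k = E[-\ddot h_{k+1}+FI_{k+1}\,|\,y_{0:k},u_{0:k-1}]$.

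For the one-derivative recursion I would set $S_k=\sum_{t=k}^{T-1}\dot h_{t+1}$, so that $S_k=\dot h_{k+1}+S_{k+1}$ and $S_k^2=\dot h_{k+1}^2+2\dot h_{k+1}S_{k+1}+S_{k+1}^2$. Taking $E[\,\cdot\,|\,y_{0:k},u_{0:k-1}]$, the term $E[S_{k+1}^2\,|\,\cdot\,]$ collapses to $E[FI_{k+1}\,|\,\cdot\,]$ by the identical tower-property step, so it remains only to kill the cross term, i.e. to show $E[\dot h_{k+1}S_{k+1}\,|\,y_{0:k},u_{0:k-1}]=0$.

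This last point is the main obstacle, and the only place where regularity genuinely enters. The argument I would use is to condition first on the finer field $\sigma(y_{0:k+1},u_{0:k})$, with respect to which $\dot h_{k+1}$ is measurable, pull $\dot h_{k+1}$ out of the inner expectation, and reduce to proving $E[S_{k+1}\,|\,y_{0:k+1},u_{0:k}]=0$. The latter follows termwise: for each $t\ge k+1$ the score identity $E[\dot h_{t+1}\,|\,y_{0:t},u_{0:t}]=0$ holds because $\dot h_{t+1}$ is the score of the conditional density $p(y_{t+1}\,|\,y_{0:t},u_{0:t},x_0)$, whose integral over $y_{t+1}$ is constantly one, and the tower property then pushes this vanishing down to the coarser field. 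Justifying the interchange of $\partial_\theta$ with the integral over $y_{t+1}$ (a differentiation-under-the-integral condition on the transition and observation densities), and checking that under a fixed policy the conditioning on $u_{0:t}$ introduces no extra randomness when $y_{t+1}$ is integrated out, is the real technical content; everything else is bookkeeping. An equivalent and slightly cleaner route is to derive the one-derivative form from the two-derivative one via the single-step information identity $E[-\ddot h_{k+1}\,|\,y_{0:k},u_{0:k}]=E[\dot h_{k+1}^2\,|\,y_{0:k},u_{0:k}]$, which packages exactly the same score computation; I would present whichever is shorter and note that the two are interchangeable.
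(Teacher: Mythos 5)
Your proposal is correct and follows essentially the same route as the paper: the two-derivative recursion via the tower property, and the one-derivative recursion by expanding the square, conditioning the cross term on the finer field $\sigma(y_{0:k+1},u_{0:k})$, pulling out $\dot h_{k+1}$, and using that the conditional expectation of the future scores vanishes. The only difference is that you spell out the termwise score identity (and the differentiation-under-the-integral caveat) behind that vanishing, which the paper simply asserts by writing the inner expectation as $0$.
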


\begin{proof}
In the case of using two derivatives this follows from iterated expectation. In one derivative we have
\begin{align*}
 FI_k(y_{0:k},u_{0:(k-1)}) & = E\left[\left(\sum_{t=k}^{T-1} \dot{h}_{t+1}\right)^2  \middle|y_{0:k},u_{0:(k-1)}\right] \\
 & = E\left[\left( \dot{h}_{k+1}\right)^2 +\left(\sum_{t=k+1}^{T-1} \dot{h}_{t+1}\right)^2
 +2\left( \dot{h}_{k+1}\right)\left(\sum_{t=k+1}^{T-1} \dot{h}_{t+1}\right)\middle|y_{0:k},u_{0:(k-1)}\right]
\end{align*}
The cross term is
\begin{align*}
& E\left[2\left( \dot{h}_{k+1}\right)\left(\sum_{t=k+1}^{T-1} \dot{h}_{t+1}\right)\middle|y_{0:k},u_{0:(k-1)}\right] \\
& \hspace{1cm} =E\left[ E\left[2\left( \dot{h}_{k+1}\right)\left(\sum_{t=k+1}^{T-1} \dot{h}_{t+1}\right)\middle| y_{0:(k+1)},u_{0:k}\right]\middle|y_{0:k},u_{0:(k-1)}\right] \\
& \hspace{1cm} =E\left[2\left( \dot{h}_{k+1}\right) E\left[\left(\sum_{t=k+1}^{T-1} \dot{h}_{t+1}\right)\middle| y_{0:(k+1)},u_{0:k}\right]\middle|y_{0:k},u_{0:(k-1)}\right] \\
& \hspace{1cm}=E\left[2\left( \dot{h}_{k+1}\right)\cdot 0\middle|y_{0:k},u_{0:(k-1)}\right] = 0
\end{align*}
Thus
\begin{align*}
 FI_k(y_{0:k},u_{0:(k-1)}) & = E\left[\left( \dot{h}_{k+1}\right)^2 +\left(\sum_{t=k+1}^{T-1} \dot{h}_{t+1}\right)^2 \middle|y_{0:k},u_{0:(k-1)}\right] \\
& = E\left[\left( \dot{h}_{k+1}\right)^2 +E\left[\left(\sum_{t=k+1}^{T-1} \dot{h}_{t+1}\right)^2\middle| y_{0:(k+1)},u_{0:k}\right] \middle|y_{0:k},u_{0:(k-1)}\right] \\
& = E\left[\left( \dot{h}_{k+1}\right)^2+FI_{k+1}\middle|y_{0:k},u_{0:(k-1)}\right].
\end{align*}
\end{proof}

\begin{mycorollary}
\[
FI =  E\left[\sum_{t=0}^{T-1} \left(\dot{h}_{t+1}\right)^2\right]
\]
and similarly
\[
FI_k(y_{0:k},u_{0:(k-1)}) =  E\left[\sum_{t=k}^{T-1} \left(\dot{h}_{t+1}\right)^2 \middle|y_{0:k},u_{0:(k-1)}\right]
\]
\label{FI express}
\end{mycorollary}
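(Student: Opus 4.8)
The plan is to prove the second identity for general $k$ by downward induction on $k$, using the one-derivative recursion established in Lemma \ref{FI recursive}; the first identity then follows as the special case $k=0$, recalling that $FI_0 = FI$. The base case is $k=T$, where the sum $\sum_{t=T}^{T-1}(\dot{h}_{t+1})^2$ is empty and $FI_T = 0$ by definition, so the claim holds trivially.

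For the inductive step, suppose the identity holds at level $k+1$, i.e.
\[
FI_{k+1} = E\left[\sum_{t=k+1}^{T-1}(\dot{h}_{t+1})^2 \middle| y_{0:k+1}, u_{0:k}\right].
\]
Substituting this into the recursion $FI_k = E[(\dot{h}_{k+1})^2 + FI_{k+1}\,|\,y_{0:k},u_{0:k-1}]$ from Lemma \ref{FI recursive} gives
\[
FI_k = E\left[(\dot{h}_{k+1})^2 + E\left[\sum_{t=k+1}^{T-1}(\dot{h}_{t+1})^2 \middle| y_{0:k+1}, u_{0:k}\right] \middle| y_{0:k}, u_{0:k-1}\right].
\]
The key step is to collapse the nested conditional expectations by the tower property: since the inner conditioning on $(y_{0:k+1},u_{0:k})$ refines the outer conditioning on $(y_{0:k},u_{0:k-1})$, iterated expectation removes the inner conditional and leaves
\[
FI_k = E\left[(\dot{h}_{k+1})^2 + \sum_{t=k+1}^{T-1}(\dot{h}_{t+1})^2 \middle| y_{0:k}, u_{0:k-1}\right] = E\left[\sum_{t=k}^{T-1}(\dot{h}_{t+1})^2 \middle| y_{0:k}, u_{0:k-1}\right],
\]
completing the induction.

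The argument is essentially a telescoping of the recursion, so there is no genuinely hard step; the only point requiring care is the bookkeeping of the conditioning sigma-fields when applying the tower property, so that the outer conditioning set is always contained in the inner one and iterated expectation applies cleanly. Setting $k=0$ then yields $FI = FI_0 = E[\sum_{t=0}^{T-1}(\dot{h}_{t+1})^2]$, where the outermost expectation is unconditional (equivalently, conditioned only on the fixed $x_0$ that is suppressed throughout), which is the first stated identity.
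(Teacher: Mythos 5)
Your proof is correct and is exactly the argument the paper intends: the paper's own proof of Corollary \ref{FI express} is the one-line remark ``this follows from using induction and Lemma \ref{FI recursive},'' and your downward induction from $FI_T=0$, applying the recursion and collapsing the nested conditional expectations via the tower property, is the full write-up of that argument. Your attention to the fact that conditioning on $(y_{0:k+1},u_{0:k})$ refines conditioning on $(y_{0:k},u_{0:k-1})$ is the right bookkeeping point, and setting $k=0$ to recover the unconditional identity matches the paper's convention $FI_0=FI$.
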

\begin{proof}
This follows from using induction and Lemma \ref{FI recursive}.
\end{proof}
%
%
%

\subsection{Approximating the Fisher Information to Go}

Running an exact dynamic program, with $FI_k(y_{0:k},u_{0:(k-1)})$ as the value function, is not computationally feasible, leading us to approximate it by, at each time $t$, examining only the previous $m$ observations.
%
%
We set

\[
 h_{k,m,\nu_m}(\theta) = \left\{
  \begin{array}{l l}
    \log p(y_k|y_{m:(k-1)},u_{m:(k-1)},\nu_m) & \quad \text{if $m \geq 0$}\\
    \log p(y_k|y_{0:(k-1)},u_{0:(k-1)},\nu_0) & \quad \text{if $m < 0$}
  \end{array} \right.
\]
where $\nu_m$ is the assumed distribution of $x_m$ and we will consider it to be fixed and known. Allowing $m$ to be negative will ease notation when $t-m<0$.
%
%
We now set
\begin{align*}
 POFI_{k,m}(y_{(k-m):k},u_{(k-m):(k-1)}) & = E\left[\sum_{t=k}^{T-1}\left(\dot{h}_{t+1,t-m,\nu_{t-m}}\right)^2\middle|y_{(k-m):k},u_{(k-m):(k-1)}\right] \\
& = E\left[\sum_{t=k}^{T-1}-\ddot{h}_{t+1,t-m,\nu_{t-m}}\middle|y_{(k-m):k},u_{(k-m):(k-1)}\right]
\end{align*}
and denote it the Partial Observation Fisher Information to Go. That the formulation in one derivative is equal to the one in two derivatives follows from the individual parts of each sum having a Fisher Information interpretation.

%
%

\subsection{Mixing conditions}
\label{sec:mixing}
\citet{Cappe} establish forgetting properties of the filter by assuming mixing conditions for Hidden Markov Models. These conditions are given in Assumption \ref{mixing}, slightly modified to allow for controls.
The discussion in \citet{Cappe} on which models satisfy these conditions applies analogously to POMDP's. Given these conditions we can prove the following, which is a modification of Lemma $12.5.3$ in \citet{Cappe};

\begin{theorem} \label{thm:mixcond}
Assume the strong mixing conditions in Assumption \ref{mixing}. Then
  \[
  (E |\dot{h}_{k,0,\nu_0}(\theta) - \dot{h}_{k,m,\nu_m}(\theta)|^2)^{1/2}
  \leq 8\sup_{x,x'\in \mathcal{X},u \in \mathcal{U}, y \in \mathcal{Y}} ||\phi_{\theta}(x,x',y,u)||\frac{\rho^{(k-m)/2-1}}{1-\rho}
 \]
 where $\phi(x,x',u,y) =\frac{\partial}{\partial \theta}\log(p(x_{t+1}=x'|x_t=x,u_t=u,\theta)p(y_{t+1}=y'|x_{t+1}=x',\theta))$ and
 $ \rho = \max_{y \in \mathcal{Y}}  1 -\frac{\varsigma^-(y)}{\varsigma^+(y)}$
\label{lemma 12.5.3}
\end{theorem}
A proof is provided in Appendix \ref{app:B}.

\subsection{Bounds on Fisher Information}
\label{sec:bounds}
In this section we show that the approximated Fisher Information approaches the true Fisher Information exponentially as one conditions on more and more observations, while using the same controls.

By Corollary \ref{FI express} the Fisher Information for the POMDP is
\[
 FI(u_{0:(T-1)})
 =E\sum_{k=0}^{T-1}\left( \dot{h}_{k+1,0,\nu_0}(\theta)\right)^2
\]
but since that is computationally intractable, we consider
\[
 POFI_m= FI_{0,m}(u_{0:(T-1)}) = E\sum_{k=0}^{T-1}\left( \dot{h}_{k+1,k-m,\nu_{k-m}}(\theta)\right)^2
\]
see definitions for $\dot{h}$ above. Here we use Fisher Information in one derivative, but as noted above it is equivalent to using the formulation in two derivatives. Also note that where $k-m<0$ we just set it to $0$ and use the initial distribution of $x_0$.

\begin{mylemma} Assume the mixing conditions in Assumption \ref{mixing} hold. Then

 \begin{align*}
 & \left(E(\dot{h}_{k+1,0,\nu_0} + \dot{h}_{k+1,k-m,\nu_{k-m}})^2\right)^{1/2} \\
 & \hspace{2cm} \leq 16\sup_{x,x'\in \mathcal{X},u \in \mathcal{U}, y \in \mathcal{Y}} |\phi_{\theta}(x,x',y,u)|\frac{\rho^{1/2}}{1-\rho} + 2\sup_{u_0} \left(E(\dot{h}_{1,0,\nu_0})^2\right)^{1/2}
 \end{align*}
 \label{bound lemma}
\end{mylemma}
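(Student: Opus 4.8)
The plan is to prove a bound that is \emph{uniform} in $k$, $m$, and $T$; no decay in $m$ is needed here, because this lemma is designed to feed the factorization $a^2-b^2=(a-b)(a+b)$ used in the proof of Theorem~\ref{FI approx}. There the decaying factor is supplied by the \emph{difference} $\dot{h}_{k+1,0,\nu_0}-\dot{h}_{k+1,k-m,\nu_{k-m}}$ (controlled by Theorem~\ref{lemma 12.5.3}), so after a Cauchy--Schwarz split it is enough for the present \emph{sum} to admit a constant envelope. Accordingly I would not try to keep the two scores together. Instead I would first split by the triangle inequality in $L^2$,
\[
\left(E(\dot{h}_{k+1,0,\nu_0}+\dot{h}_{k+1,k-m,\nu_{k-m}})^2\right)^{1/2} \le \left(E(\dot{h}_{k+1,0,\nu_0})^2\right)^{1/2} + \left(E(\dot{h}_{k+1,k-m,\nu_{k-m}})^2\right)^{1/2},
\]
and then bound each single predictive score by the \emph{same} constant.

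The per-term bound is the crux. For a generic predictive score $\dot{h}_{n,j,\nu_j}$ that conditions back to time $j$, I would compare it against a short-memory predictive score $\dot{h}_{n,n-1,\nu_{n-1}}$ that conditions only on the most recent observation, and split again,
\[
\left(E(\dot{h}_{n,j,\nu_j})^2\right)^{1/2} \le \left(E(\dot{h}_{n,j,\nu_j}-\dot{h}_{n,n-1,\nu_{n-1}})^2\right)^{1/2} + \left(E(\dot{h}_{n,n-1,\nu_{n-1}})^2\right)^{1/2}.
\]
The first term is exactly the filter-forgetting quantity controlled by Theorem~\ref{lemma 12.5.3}: two predictive scores built from the same recent observation but from different, and hence forgotten, initial conditions, yielding a bound of the form $8\sup|\phi_\theta|\,\rho^{1/2}/(1-\rho)$, with the power of $\rho$ as delivered by Theorem~\ref{lemma 12.5.3}. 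The second term, the short-memory score, I would reduce to the base quantity by time-homogeneity: fixing the control makes the underlying chain stationary, so $\dot{h}_{n,n-1,\nu_{n-1}}$ is a time shift of $\dot{h}_{1,0,\nu_0}$, and a supremum over the initial control absorbs the control variation, giving $\left(E(\dot{h}_{n,n-1,\nu_{n-1}})^2\right)^{1/2} \le \sup_{u_0}\left(E(\dot{h}_{1,0,\nu_0})^2\right)^{1/2}$.

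Combining, each of the two scores $\dot{h}_{k+1,0,\nu_0}$ and $\dot{h}_{k+1,k-m,\nu_{k-m}}$ contributes one application of Theorem~\ref{lemma 12.5.3} (hence the coefficient $16=2\cdot 8$ on the forgetting term) and one base term (hence the coefficient $2$ on $\sup_{u_0}(E(\dot{h}_{1,0,\nu_0})^2)^{1/2}$), and adding the two per-term bounds assembles the stated inequality.

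The main obstacle is the difference estimate invoked in the per-term step, i.e. Theorem~\ref{lemma 12.5.3} itself: controlling the $L^2$ distance between two predictive scores that agree on the retained recent observations but start from distinct initial laws, uniformly in the time indices. This is the forgetting-of-the-filter bound extended from the Hidden Markov setting of Cappe et al.\ \cite{Cappe} to the controlled POMDP setting, it rests on the modified strong mixing conditions of Assumption~\ref{mixing}, and its proof is the technical heart (deferred to Appendix~2). A secondary, milder point is justifying the reduction of the short-memory score to $\dot{h}_{1,0,\nu_0}$: this uses time-homogeneity under a fixed control, and the supremum over $u_0$ in the statement is precisely what absorbs the control applied at the shifted time.
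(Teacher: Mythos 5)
Your proposal is correct and follows essentially the same route as the paper: after the triangle-inequality splits, your bound reduces to exactly the paper's decomposition $\|\dot{h}_{k+1,0,\nu_0}-c\|_2+\|\dot{h}_{k+1,k-m,\nu_{k-m}}-c\|_2+2\|c\|_2$ with $c$ the one-observation short-memory score, the two differences handled by Theorem~\ref{lemma 12.5.3} (giving the factor $16=2\cdot 8$) and the base term by the same time-shift, supremum-over-controls reduction the paper uses via its quantity $A(m')$. The only cosmetic difference is that the paper carries a general intermediate lag $m'$ and sets $m'=0$ at the end (noting this may not be the best bound), whereas you fix the one-observation intermediate from the start.
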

\begin{proof}
Set
\[
 A(m') = \sup_{u_1,\ldots, u_{m'-1}} \left(E(\dot{h}_{m',0,\nu_0})^2\right)^{1/2}
\]
which sets an upper bound on the length of $\dot{h}_m$. Note that $A(m')$ also bounds $(E(\dot{h}_{k+1,k-m,\nu_{k-m}})^2)^{1/2}$ since $\nu_{k-m} = \nu_0$.
Now
\begin{align*}
\left(E(\dot{h}_{k+1,0,\nu_0} + \dot{h}_{k+1,k-m,\nu_{k-m}})^2\right)^{1/2}
 & \leq   \left(E(\dot{h}_{k+1,0,\nu_0} - \dot{h}_{k+1,k-m',\nu_{k-m'}})^2\right)^{1/2} \\
 & \hspace{0.5cm} +  \left(E(\dot{h}_{k+1,k-m,\nu_{k-m}}- \dot{h}_{k+1,k-m',\nu_{k-m'}})^2\right)^{1/2} \\
 & \hspace{0.5cm} + 2\left(E(\dot{h}_{k+1,k-m',\nu_{k-m'}})^2\right)^{1/2} \\
 & \leq 16 \sup|\phi_{\theta}| \frac{\rho^{(1+\min(m,m'))/2}}{1-\rho} + 2A(m'+1)
\end{align*}
using Theorem \ref{lemma 12.5.3}. 
Setting $m'=0$ gives the result, although that might not be the best bound.
\end{proof}

\begin{theorem}[Restated from Theorem \ref{FI approx} in main text]
Assume the conditions in Assumption \ref{mixing} hold. Then, for $m < T$ and any control policy, we have
\[
 |FI - POFI_{0,m}| \leq c_1(T-1-m)\rho^{m/2}
 \]
 where $c_1 = 8M(\theta)\sup_{x,x'\in \mathcal{X},u \in \mathcal{U}, y \in \mathcal{Y}} |\phi_{\theta}(x,x',y,u)|\frac{1}{\rho^{1/2}(1-\rho)}$ and
$M(\theta)$ is the bound from Lemma \ref{bound lemma}; $M(\theta) = 16\sup |\phi_{\theta}|\frac{\rho^{1/2}}{1-\rho} + 2\sup_{u_0} \left(E(\dot{h}_{1,0,\nu_0})^2\right)^{1/2}$.
\label{FI approx append}
\end{theorem}

\begin{proof}
\begin{align*}
 |FI - POFI_{0,m}| & =|E\sum_{k=0}^{T-1}\left( \dot{h}_{k+1,0,\nu_0}(\theta)\right)^2 - E\sum_{k=0}^{T-1}\left( \dot{h}_{k+1,k-m,\nu_{k-m}}(\theta)\right)^2| \\
&  = |\sum_{k=m+1}^{T-1}E(\dot{h}_{k+1,0,\nu_0}^2-\dot{h}_{k+1,k-m,\nu_{k-m}}^2)| \\
&  \leq \sum_{k=m+1}^{T-1}|E( \dot{h}_{k+1,0,\nu_0}-\dot{h}_{k+1,k-m,\nu_{k-m}})
 \cdot ( \dot{h}_{k,0,\nu_0}+\dot{h}_{k,k-m,\nu_{k-m}})|
\end{align*}
and by Cauchy Schwarz the final expression is bounded by
\[
 E = \sum_{k=m+1}^{T-1}\left(E\left| \dot{h}_{k+1,0,\nu_0}-\dot{h}_{k+1,k-m,\nu_{k-m}}\right|^2\right)^{1/2}
 \cdot \left(E\left| \dot{h}_{k,0,\nu_0}+\dot{h}_{k,k-m,\nu_{k-m}}\right|^2\right)^{1/2}.
\]
For the first parenthesis in $E$ we use Theorem \ref{lemma 12.5.3} 
to get
\[
 \left(E\left| \dot{h}_{k+1,0,\nu_0}-\dot{h}_{k+1,k-m,\nu_{k-m}}\right|^2\right)^{1/2}
 \leq 8\sup_{x,x'\in \mathcal{X},u \in \mathcal{U}, y \in \mathcal{Y}} |\phi_{\theta}(x,x',y,u)|\frac{\rho^{(m+1)/2-1}}{1-\rho}
\]
and the second one is bounded by the Lemma \ref{bound lemma}. We get

\begin{align*}
  |FI - POFI_{0,m}| & \leq 8M(\theta)\sup_{x,x'\in \mathcal{X},u \in \mathcal{U}, y \in \mathcal{Y}} |\phi_{\theta}(x,x',y,u)|\sum_{k=m+1}^{T-1}\frac{\rho^{(m+1)/2-1}}{1-\rho} \\
& = 8M(\theta)(T-1-m)\sup_{x,x'\in \mathcal{X},u \in \mathcal{U}, y \in \mathcal{Y}} |\phi_{\theta}(x,x',y,u)|\frac{\rho^{(m+1)/2-1}}{1-\rho}
\end{align*}
\end{proof}

  Exactly the same arguments can be used to show that the Partial Observation Fisher Information to Go $POFI_{k,m}(y_{(k-m):k},u_{(k-m):(k-1)})$ approaches the true Fisher Information to Go as $m$ increases.

\subsection{Best Possible Fisher Information}
\label{sec:best FI}
A related problem we are interested in is how well a control policy that looks at the last say $m$ observations does in comparison with a control policy that considers all past observations. That is, we want a bound on the best possible Fisher Information given a control policy that consider all past observations, compared with the best possible Fisher Information with a control policy that only considers the last $m$ observations.

We first establish a baseline difference between two parts of the Fisher Information.
\begin{mylemma}
 Assume the conditions in Assumption \ref{mixing} hold. Then, for any control policy and any $k$, we have
 \[
  |E(\dot{h}_{k+1,0,\nu_0}^2 - \dot{h}_{k+1,k-m,\nu_{k-m}}^2)| \leq 8 M(\theta) \sup_{x,x'\in \mathcal{X},u \in \mathcal{U}, y \in \mathcal{Y}} |\phi_{\theta}(x,x',y,u)|\frac{\rho^{(m+1)/2-1}}{1-\rho}
 \]
 where $M(\theta) = 16\sup |\phi_{\theta}|\frac{\rho^{1/2}}{1-\rho} + 2\sup_{u_0} \left(E(\dot{h}_{1,0,\nu_0})^2\right)^{1/2}$ is the bound from Lemma \ref{bound lemma}.

\label{ind diff}
\end{mylemma}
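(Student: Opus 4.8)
The plan is to recognize that the quantity being bounded is exactly the $k$-th summand of the telescoped difference analyzed in the proof of Theorem~\ref{FI approx append}, so the same two-factor estimate applies termwise. Concretely, I would first factor the difference of squares as
\[
\dot{h}_{k+1,0,\nu_0}^2 - \dot{h}_{k+1,k-m,\nu_{k-m}}^2
= \left(\dot{h}_{k+1,0,\nu_0} - \dot{h}_{k+1,k-m,\nu_{k-m}}\right)\left(\dot{h}_{k+1,0,\nu_0} + \dot{h}_{k+1,k-m,\nu_{k-m}}\right),
\]
then take expectations, pass the absolute value inside, and apply the Cauchy--Schwarz inequality to split the bound into the product of two $L^2$ norms:
\[
\left|E\!\left(\dot{h}_{k+1,0,\nu_0}^2 - \dot{h}_{k+1,k-m,\nu_{k-m}}^2\right)\right|
\leq \left(E\left|\dot{h}_{k+1,0,\nu_0} - \dot{h}_{k+1,k-m,\nu_{k-m}}\right|^2\right)^{1/2}\left(E\left|\dot{h}_{k+1,0,\nu_0} + \dot{h}_{k+1,k-m,\nu_{k-m}}\right|^2\right)^{1/2}.
\]

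It then remains only to bound the two factors by results already in hand. For the difference factor I would invoke Theorem~\ref{lemma 12.5.3} with time index $k+1$ and conditioning start $k-m$; its forgetting rate $\rho^{(\mathrm{time}-\mathrm{start})/2-1}$ specializes to $\rho^{(m+1)/2-1}$, with prefactor $8\sup_{x,x'\in X,u\in U,y\in Y}|\phi_\theta(x,x',y,u)|/(1-\rho)$. For the sum factor I would apply Lemma~\ref{bound lemma} directly, which bounds its $L^2$ norm by the constant $M(\theta)$. Multiplying the two estimates produces exactly the claimed inequality.

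There is no substantive obstacle here: the real work, namely the exponential forgetting estimate and the uniform $L^2$ bound on the sum, has already been carried out in Theorem~\ref{lemma 12.5.3} and Lemma~\ref{bound lemma}, and the present lemma is essentially a repackaging of a single term of the telescoping sum in Theorem~\ref{FI approx append}. The only point requiring care is index bookkeeping: I would verify that the conditioning start $k-m$ in the second summand yields the exponent $(m+1)/2-1$ when applying Theorem~\ref{lemma 12.5.3}, and that both factors produced by the difference-of-squares factorization carry time index $k+1$, so that Lemma~\ref{bound lemma} (which is stated for the sum at index $k+1$) applies verbatim.
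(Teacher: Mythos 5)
Your proposal is correct and is essentially the paper's own argument: the paper proves this lemma by citing Lemma \ref{bound lemma} and Theorem \ref{lemma 12.5.3} ``as in the proof of Theorem \ref{FI approx},'' which is exactly the difference-of-squares factorization, Cauchy--Schwarz split, and termwise application of those two bounds that you spell out. Your index bookkeeping (both factors at time $k+1$, conditioning start $k-m$ giving exponent $(m+1)/2-1$) is also right, and in fact cleaner than the paper's own display, which has a typographical slip writing the sum factor at index $k$.
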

\begin{proof}
 Follows from Lemma \ref{bound lemma} and Theorem \ref{lemma 12.5.3}
 as in the proof of Theorem \ref{FI approx}.
\end{proof}
As above, we assume that the Fisher Information to Go;
\[
 FI_k(y_{0:k},u_{0:(k-1)}) = E\left[\left( \dot{h}_{k+1,0,\nu_0}\right)^2+FI_{k+1}\middle|y_{0:k},u_{0:(k-1)}\right]
\]
is approximated by
\[
 POFI_{k,m}(y_{(k-m):k},u_{(k-m):(k-1)}) = E\left[\left( \dot{h}_{k+1,k-m,\nu_{k-m}}\right)^2 + POFI_{k+1,m}\middle|y_{(k-m):k},u_{(k-m):(k-1)}\right]
\]

Given that the control policy is obtained with dynamic programming, we have that the optimal control at time $t$ is dependent on the optimal control obtained at time $t+1$. Let $u_1^*,\ldots, u_{T-1}^*$ denote the set of optimal controls obtained in this manner, i.e. $u_k^*,\ldots, u_{T-1}^*$ maximize $FI_k$.

As argued these controls computationally infeasible to calculate and thus we resort to approximate optimal controls, here denoted $u_{0,m}^*,\ldots,u_{T-1,m}^*$, where $u_{k,m}^*,\ldots,u_{T-1,m}^*$ maximize $POFI_{k,m}$.

We now restate Theorem \ref{best FI} from Section $2.3$ on the loss of using an approximate control policy instead of an exact one in Fisher Information, in an experiment of length $T$.
\begin{theorem} [Restated from Theorem \ref{best FI} in the main text] Given that the mixing conditions in Assumption \ref{mixing} hold we have
\[
 0 \leq FI(u_0^*,\ldots,u_{T-1}^*) - FI(u_{0,m}^*,\ldots, u_{T-1,m}^*) \leq c_2 T(T+1)\rho^{m/2}
\]
where $c_2= 8 M(\theta) \sup |\phi_{\theta}(x,x',y,u)|\frac{1}{\rho^{1/2}(1-\rho)}$, and $M(\theta)$ is the bound from Lemma \ref{bound lemma}.
\label{best FI append}
\end{theorem}
\begin{proof}
We analyze the difference by bounding errors in each step of the dynamic program inductively, starting at time $t=T-1$ and going backwards. Set $\varepsilon = 8 M(\theta) \sup |\phi_{\theta}(x,x',y,u)|\frac{\rho^{(m+1)/2-1}}{1-\rho}$.

We find that
\begin{align*}
 0 & \leq FI_{T-1}(u_{T-1}^*) - FI_{T-1}(u_{T-1,m}^*) \\
& \leq FI_{T-1}(u_{T-1}^*) - FI_{T-1}(u_{T-1,m}^*) +(FI_{T-1,m}(u_{T-1,m}^*) - FI_{T-1,m}(u_{T-1}^*))
\end{align*}
so far only using that $u_{T-1}^*$ maximizes $FI_{T-1}$ and $u_{T-1,m}^*$ maximizes $FI_{T-1,m}$. Now
\[
 |FI_{T-1}(u_{T-1}^*)- POFI_{T-1,m}(u_{T-1}^*)| +| FI_{T-1}(u_{T-1,m}^*) - POFI_{T-1,m}(u_{T-1,m}^*) |
 \leq 2 \varepsilon
\]
by Lemma \ref{ind diff}.

We now inductively assume
\[
 |FI_{T-s}(u_{(T-s):(T-1)}^*) - FI_{T-s}(u_{(T-s):(T-1),m}^*)| \leq s(s+1)\varepsilon
\]
where $u_{(T-s):(T-1),m}^* = u_{T-s,m}^*,\ldots, u_{T-1,m}^*$, and then get
\begin{align}
 & |FI_{T-s}( u_{(T-s):(T-1)}^*) - POFI_{T-s,m}(u_{(T-s):(T-1),m}^*)|
  \\ & \hspace{2cm} \leq
  |FI_{T-s}( u_{(T-s):(T-1)}^*) - FI_{T-s}(u_{(T-s):(T-1),m}^*)| \nonumber \\
\hspace{4cm} & \hspace{1cm} + |FI_{T-s}( u_{(T-s):(T-1),m}^*) - POFI_{T-s,m}(u_{(T-s):(T-1),m}^*)| \nonumber \\
\hspace{2cm} & \leq s(s+1)\varepsilon + s\varepsilon \nonumber \\
\hspace{2cm} & =s(s+2)\varepsilon \label{eq:tag1}
\end{align}
Now moving from $s$ to $s+1$ we have
\[
 POFI_{T-(s+1),m}(u_{(T-(s+1)):(T-1),m}^*) \geq POFI_{T-(s+1),m}(u_{T-(s+1)}^*,u_{(T-s):(T-1),m}^*)
\]
since $u_{(T-(s+1)):(T-1),m}^*$ are the controls that maximize $POFI_{T-(s+1),m}$. By adding and subtracting the same quantity we get the following equivalent inequality
\begin{align}
& (POFI_{T-(s+1),m}(u_{(T-(s+1)):(T-1),m}^*)-FI_{T-(s+1)}(u_{(T-(s+1)):(T-1),m}^*)) \label{eq:tag2} \\
 & \hspace{1cm}  -(POFI_{T-(s+1),m}(u_{T-(s+1)}^*,u_{(T-s):(T-1),m}^*)-FI_{T-(s+1)}(u_{(T-(s+1)):(T-1)}^*)) \label{eq:tag3} \\
& \hspace{2cm} \geq  FI_{T-(s+1)}(u_{(T-(s+1)):(T-1)}^*)- FI_{T-(s+1)}(u_{(T-(s+1)):(T-1),m}^*) \geq 0
\end{align}
\eqref{eq:tag2} is bounded by $(s+1)\varepsilon$ by Lemma \ref{ind diff} and \eqref{eq:tag3} by $\varepsilon + s(s+2)\varepsilon$ using \eqref{eq:tag1} and Lemma \ref{ind diff}. Therefore
\begin{align*}
& |FI_{T-(s+1)}(u_{(T-(s+1)):(T-1)}^*)- FI_{T-(s+1)}(u_{(T-(s+1)):(T-1),m}^*)| \\
& \hspace{2cm}  \leq (s+1)\varepsilon + \varepsilon +s(s+2)\varepsilon = (s+1)(s+2)\varepsilon
\end{align*}
and for the whole experiment we find
\[
 |FI(u_{0:(T-1)}^*) - FI(u_{0:(T-1),m}^*)| \leq T(T+1)\varepsilon
\]
\end{proof}

\section{Modified HMM theory} \label{app:B}
This section is devoted to expanding Hidden Markov Model Theory to Partially Observed Markov Decision Processes. We base it completely on \citet{Cappe} and use their notation, only changing what is necessary. The purpose is to prove Theorem \ref{thm:mixcond} in Appendix \ref{sec:mixing}, which is a modified version of Lemma $12.5.3$ in \citet{Cappe}. In most cases the changes will amount to adding controls and seeing that the theory follows through, although the proof of Theorem 3 has more substantial changes.
\subsection{Setup}

Let $(X,\mathcal{X})$ and $(Y,\mathcal{Y})$ be the state space and the observations space respectively. Let
\[
 Q^u(x,A) = \int_A q^u(x,x')dx', \textrm{  } A \in \mathcal{X}, u \in \mathcal{U}
\]
be a transition kernel for our state space, where $u$ is a control, and $\mathcal{U}$ is finite. Also let
\[
 G(x,A) = \int_A g(x,y)dy, \textrm{   } A \in \mathcal{Y}
\]
be the transition kernel for moving from the state space to the observation space.
%

We generally assume that the Markov Chain is initialized with distribution $\nu$, and then runs for $n$ steps $x_{0:n}=x_0,\ldots,x_n$ and that $n-1$ decisions are made on what controls $u$ to use. This results in $n$ observations $y_{0:n} = y_0,\ldots, y_n$ and $n-1$ control $u_{0:(n-1)} = u_0,\ldots,u_{n-1}$.

\subsection{Hidden Markov Model theory}
\begin{mydef}[Definition 3.1.6 in \citet{Cappe}]
 Conditional on $y_{0:k}$ and $u_{0:(k-1)}$ we define the forward variable
 \[
  \alpha_{\nu,k}(y_{0:k},u_{0:(k-1)},f) =\int \cdots \int f(x_k) \nu(dx_0)g(x_0,y_0)\prod_{l=1}^k Q^{u_{l-1}}(x_{l-1},dx_l)g(x_l,y_l)
 \]
and conditional on $y_{(k+1):n}$ and $u_{k:n}$ we define the backward variable
\[
 \beta_{k|n}(y_{(k+1):n},u_{k:n},x) = \int \cdots \int Q^{u_k}(x,dx_{k+1}) g(x_{k+1},y_{k+1})\prod_{l=k+2}^{n} Q^{u_{l-1}}(x_{l-1},dx_l)g(x_l,y_l).
\]

\end{mydef}
As in the classical case, these satisfy recursion formulas
\[
 \alpha_{\nu,k}(y_{0:k},u_{0:(k-1)},f) = \int f(x_k) \int \alpha_{\nu,k-1}(y_{0:(k-1)},u_{0:(k-2)},dx_{k-1})Q^{u_{k-1}}(x_{k-1},dx_k)g(x_k,y_k)
\]
with initial condition
\[
 \alpha_{\nu,0}(f) = \int f(x_0)g(x_0,y_0)\nu(dx_0)
\]
and similarly
\[
 \beta_{k|n}(y_{(k+1):n},u_{k:n},x) =\int Q^{u_k}(x,dx_{k+1})g(x_{k+1},y_{k+1}) \beta_{k+1|n}(y_{(k+2):n},u_{(k+1):n},x_{k+1}).
\]

A standard result in HMM theory is that, conditional on the observations $y_{0:n}$, the Process $\{X_k\}_{k\geq 0}$ still is a Markov Chain, although non-homogeneous, with a transition kernel called the Forward Smoothing Kernel. We state the transition kernel here for our case, also conditional on the controls.

\begin{mydef}[Definition 3.3.1 in \citet{Cappe}]{Forward Smoothing Kernels.}
 Given $n\geq 0$ define  the transition kernels for indices $k\in \{0,\ldots,n-1\}$:
 \[
  F_{k|n}(x,A,y_{(k+1):n},u_{k:n}) = \frac{\int_A Q^{u_k}(x,dx_{k+1})g(x_{k+1},y_{k+1}) \beta_{k+1|n}(x_{k+1})}{\beta_{k|n}(x)}
 \]
\end{mydef}
Note that the Forward Smoothing Kernels are defined in terms of the backward variables.

We are generally interested in calculating smoothers and filters for our POMDP.
\begin{mydef}[Definition 3.1.3 in \citet{Cappe}]
 We let $\phi_{\nu,k:l|n}$ denote the conditional distribution of $X_{k:l}$ given $Y_{0:n}$ and in our case $u_{0:(n-1)}$ as well.
\end{mydef}

The Forward Smoothing Kernel allows us a convenient way of calculating the smoothing distributions. We first compute all the backward variables $\beta_{k|n}$ using the backward recursion given. We then note that $\phi_{\nu,0|n}$ can be calculated as
\[
 \phi_{\nu,0|n}(A) = \frac{\int_A \nu(dx_0)g(x_0,y_0)\beta_{0|n}(x_0)}{\int \nu(dx_0)g(x_0,y_0)\beta_{0|n}(x_0)}
\]
and then we have the following recursion
\[
 \phi_{\nu,k+1|n}(x) = \int \phi_{\nu,k|n}(dx_k)F_{k|n}(x_k,x) = \phi_{\nu,k|n}F_{k|n}
\]
where $F_{k|n}$ are the forward kernels, and the last equation is a short hand way of writing the integral.

Using this recursion repeatedly allows to express the smoother in the following way
\[
 \phi_{\nu,k|n}[y_{0:n}, u_{0:(n-1)}] = \phi_{\nu,0|n} \prod_{i=1}^k F_{i-1|n} [y_{i:n},u_{(i-1):(n-1)}].
\]
\subsection{Total Variation and the Dobrushin Coefficient}
To continue towards forgetting properties we need to introduce Total Variation (see Definition 4.3.1 in \citet{Cappe}). Let $\xi$ be a signed measure (it can be negative) and let $\xi = \xi_+ -\xi_-$ where $\xi_+,\xi_-$ are (positive) measures. So if $X$ is the state space
\[
 ||\xi||_{TV} = \xi_+(X) +\xi_-(X)
\]
To define the Dobrushin Coefficient (see Definition 4.3.7 in \citet{Cappe}). Let $K$ be a transition Kernel from $X$ to $Y$, the Dobrushin coefficient is given by
\[
 \delta(K) = \frac{1}{2} \sup_{(x,x')\in X\times X} ||K(x,\cdot) - K(x',\cdot)||_{TV}
\]
The Dobrushin coefficient is sub-multiplicative (see Prop. 4.3.10 in \citet{Cappe}). If $K:X\rightarrow Y, R: Y\rightarrow Z$ are $2$ transition kernels we have
\[
 \delta(KR) = \delta\left( \int K(\cdot,dx)R(x,\cdot)\right) \leq \delta(K)\delta(R)
\]

It can be shown that $ 0 \leq \delta(K)\leq 1$, however to get forgetting properties we often need $\delta(K) \leq 1-\varepsilon$, where $\varepsilon >0$.

The latter inequality holds if we assume the Doeblin Condition is satisfied:

\begin{myassump}[Assumption 4.3.12 in \citet{Cappe}]
 There exist an integer $m \geq 1, \epsilon \in (0,1)$, and a probability measure $\nu$ on $(X,\mathcal{X})$ such that for any $x\in X$ and $A \in \mathcal{X}$,
 \[
  Q^m(x,A) \geq \varepsilon \nu(A)
 \]
\end{myassump}
Under these assumptions Lemma 4.3.13 in \citet{Cappe} gives $\delta(Q^m)\leq 1-\varepsilon$.

When considering forgetting properties it is reasonable to expect that the filter $\phi_{\nu,k|n}$ depends less and less on the initial distribution of $X_0\sim\nu$, as $k$ increases. Specifically we have that when comparing initial distributions $\nu$ and $\nu'$:
\begin{align*}
& \phi_{\nu,k|n}(y_{0:n},u_{0:(n-1)},x_k) -\phi_{\nu',k|n}(y_{0:n},u_{0:(n-1)},x_k) \\
& = \int \cdots \int \left( \phi_{\nu,0|n}(y_{0:n},u_{0:(n-1)},x_k) -\phi_{\nu',0|n}(y_{0:n},u_{0:(n-1)},x_k) \right)\prod_{i=1}^k F_{i-1|n} (x_{k-1},x_k)
\end{align*}

Now using Corollary 4.3.9 in \citet{Cappe} we have
\[
 ||\xi K - \xi'K||_{TV} \leq \delta(K) || \xi - \xi'||_{TV}
\]
where $\xi,\xi'$ are probability measures, $K$ a transition kernel.

Using this on our representation of the filters gives
\[
 ||\phi_{\nu,k|n} - \phi_{\nu',k|n}||_{TV} \leq \delta\left(\prod_{i=1}^k F_{i-1|n}(y_{i:n},\cdot)\right)||\phi_{\nu,0|n} - \phi_{\nu',0|n}||_{TV}
\]
and since the Dobrushin coefficient is sub-multiplicative
\[
\leq \prod_{i=1}^k\delta\left( F_{i-1|n}(y_{i:n},\cdot)\right)||\phi_{\nu,0|n} - \phi_{\nu',0|n}||_{TV}
\]
and since the Dobrushin coefficient $\delta$ satisfies $0\leq \delta \leq 1$ we at least have that the difference between the two filters is non-expanding.

Establishing forgetting properties thus amounts to showing $\delta(F_{i-1|n}(y_{i:n})) \leq 1-\varepsilon$ for the forward smoothing kernels $F_{i|n}$. Note that so far no assumptions have been made on how quickly the Hidden Markov Model mixes. Those assumptions are made to get $\delta(F_{i|n})\leq 1-\varepsilon$.

\citet{Cappe} establish contracting bounds on the Dobrushin coefficient by imposing Strong Mixing conditions on the transition probabilities of the Hidden Markov Model.
\begin{myassump}[Assumption 4.3.21 in \citet{Cappe}]
 Strong Mixing Conditions. There exist a transition kernel $K: Y\rightarrow X$ and measurable functions $\varsigma^-$ and $\varsigma^+$ from $Y$ to $(0,\infty)$ such that for any $A \in \mathcal{X}$ and $y \in Y$,
 \[
  \varsigma^-(y)K(y,A) \leq \int_A Q(x,dx')g(x',y) \leq \varsigma^+(y) K(y,A)
 \]

\end{myassump}

In our case we have different transition kernels for each control. The weakest assumptions we can get away with is, if each transition kernel $Q^u$ has a corresponding transition kernel $K^u$ and measurable functions $\varsigma^-(y,u)$ and $\varsigma^+(y,u)$ satisfying the strong mixing condition. By letting $\varsigma^-(y) = \min_u \varsigma^-(y,u)$ and $\varsigma^+(y) = \max_u \varsigma^+(y,u)$ we see that we can consider the same $\varsigma$ functions for each transition kernel $Q^u$. We restate the Strong mixing conditions:

\begin{myassump}
 Modified Strong Mixing Conditions. For each control $u$ there exist a transition kernel $K^u: Y\rightarrow X$ and measurable functions $\varsigma^-$ and $\varsigma^+$ from $Y$ to $(0,\infty)$ such that for any $A \in \mathcal{X}$ and $y \in Y$,
 \[
  \varsigma^-(y)K^u(y,A) \leq \int_A Q^u(x,dx')g(x',y) \leq \varsigma^+(y) K^u(y,A)
 \]

\end{myassump}

Lemma $4.3.22$ in  \citet{Cappe} uses the mixing conditions stated above to establish contracting bounds on the Dobrushin coefficient. We restate the  Lemma for the POMDP case, where we also condition on the controls, and use the modified mixing conditions.

\begin{theorem}[Lemma 4.3.22 in \citet{Cappe}]
\label{lemma 4.3.22}
Under the strong mixing conditions the following holds
 \begin{enumerate} 
  \item For any non-negative integers $k$ and $n$ such that $k<n$ and $x\in X$,
  \[
   \prod_{j=k+1}^n \varsigma^-(y_j) \leq \beta_{k|n} [y_{(k+1):n},u_{k:n}] (x) \leq \prod_{j=k+1}^n \varsigma^+(y_j).
  \]
\item For any non-negative integers $k$ and $n$ such that $k<n$ and any probability measures $\nu$ and $\nu'$ on $(X,\mathcal{X})$,
\[
 \frac{\varsigma^-(y_{k+1})}{\varsigma^+(y_{k+1})}\leq \frac{\int \nu(dx) \beta_{k|n}[y_{(k+1):n},u_{k:n}](x)}{\int \nu'(dx) \beta_{k|n}[y_{(k+1):n},u_{k:n}](x)} \leq  \frac{\varsigma^+(y_{k+1})}{\varsigma^-(y_{k+1})}.
\]
\item For any non-negative integers $k$ and $n$ such that $k<n$, there exists a transition kernel $\lambda_{k|n}$ from $(Y^{n-k},\mathcal{Y}^{(n-k)})$ to $(X,\mathcal{X})$ such that for any $x \in X$, $A\in \mathcal{X}$, and $y_{(k+1):n} \in Y^{n-k}$,
\[
 \frac{\varsigma^-(y_{k+1})}{\varsigma^+(y_{k+1})}\lambda_{k,n}(y_{(k+1):n},u_{k:n},A) \leq F_{k|n}[y_{(k+1):n},u_{k:n}](x,A)
\]
\[
 \leq \frac{\varsigma^+(y_{k+1})}{\varsigma^-(y_{k+1})}\lambda_{k,n}(y_{(k+1):n},u_{k:n},A).
\]
\item For any non-negative integers $k$ and $n$, the Dobrushin coefficient of the forward smoothing kernel $F_{k|n}[y_{(k+1):n},u_{k:n}]$ satisfies
\[
 \delta(F_{k|n}[y_{(k+1):n},u_{k:n}]) \leq \rho_0(y_{k+1}) := 1 -\frac{\varsigma^-(y_{k+1})}{\varsigma^+(y_{k+1})}
\]
if $k<n$, and
\[
  \delta(F_{k|n}[y_{(k+1):n},u_{k:n}]) \leq 1 -\int \varsigma^-(y) dy
\]
if $k\geq n$.
 \end{enumerate}
\end{theorem}

\begin{proof}
The proof is the same as for the corresponding Lemma in \citet{Cappe}, but with slight modifications to allow for conditioning on controlsl
\begin{enumerate} 
  \item Letting $A=X$ in the strong mixing conditions we find that for all $u$
  \[
    \varsigma^-(y) \leq \int Q^u(x,dx')g(x',y) \leq \varsigma^+(y).
  \]
  We also have $\beta_{k|n}(x)$ expressed as
  \begin{align*}
    & = \int_{x_{k+1}} \cdots \int_{x_n} Q^{u_k}(x,dx_{k+1})g(x_{k+1},y_{k+1}) \prod_{l=k+2}^n Q^{u_{l-1}}(x_{l-1},dx_l)g(x_l,y_l) \\
   &  \hspace{1cm}  =\int_{x_{k+1}} Q^{u_k}(x,dx_{k+1})g(x_{k+1},y_{k+1}) \\
& \hspace{1.2cm}   \times\int_{x_{k+2}} \cdots \int_{x_n} Q^{u_{k+1}}(x_{k+1},dx_{k+2}) g(x_{k+2},y_{k+2}) \prod_{l=k+3}^n Q^{u_{l-1}}(x_{l-1},dx_l)g(x_l,y_l) \\
&  \hspace{1cm}   \leq \varsigma^+(y_{k+1}) \sup_{x_{k+1}} \int_{x_{k+2}} \cdots \int_{x_n} Q^{u_{k+1}}(x_{k+1},dx_{k+2}) g(x_{k+2},y_{k+2}) \prod_{l=k+3}^n Q^{u_{l-1}}(x_{l-1},dx_l)g(x_l,y_l) \\
 &  \hspace{1cm} = \varsigma^+(y_{k+1}) \sup_{x} \beta_{k+1|n}(x) \leq \prod_{j=k+1}^n \varsigma^+(y_j).
  \end{align*}
  The other inequality is similar.

  \item Using the recursion for the backward variables we find
  \begin{align*}
  &  \int_x \nu(dx)\beta_{k|n}(y_{(k+1):n},u_{k:n}) \\
 & \hspace{1cm}  = \int_x \int_{x_{k+1}} \nu(dx)Q^{u_k}(x,x_{k+1})g(x_{k+1},y_{k+1})\beta_{k+1|n}(y_{(k+2):n},u_{(k+1):n},dx_{k+1}) \\
  &  \hspace{1cm} = \int_{x_{k+1}} \left[\int_x  \nu(dx)Q^{u_k}(x,x_{k+1})g(x_{k+1},y_{k+1})\right]\beta_{k+1|n}(y_{(k+2):n},u_{(k+1):n},dx_{k+1}) \\
  &  \hspace{1cm} \leq \int_{x_{k+1}} \left[\int_x  \nu(dx)\varsigma^+(y_{k+1})K^{u_k}(y_{k+1},x_{k+1})\right]\beta_{k+1|n}(y_{(k+2):n},u_{(k+1):n},dx_{k+1}) \\
 &  \hspace{1cm} =\varsigma^+(y_{k+1})\int_{x_{k+1}} K^{u_k}(y_{k+1},x_{k+1})\beta_{k+1|n}(y_{(k+2):n},u_{(k+1):n},dx_{k+1}).
  \end{align*}
  We get a similar inequality for $\varsigma^-$. Also note that the last integral doesn't depend on $\nu$, so it cancels when we take the ratio. The result follows.

  \item We have that
  \begin{align*}
    F_{k|n}[y_{(k+1):n},u_{k:n}](x,A) & = \frac{\int_A Q^{u_k}(x,dx_{k+1})g(x_{k+1},y_{k+1}) \beta_{k+1|n}(x_{k+1})}{\int Q^{u_k}(x,dx_{k+1})g(x_{k+1},y_{k+1}) \beta_{k+1|n}(x_{k+1})} \\
   &  \leq\frac{\varsigma^+(y_{k+1})}{\varsigma^-(y_{k+1})}\cdot\frac{\int_A K^{u_k}(y_{k+1},dx_{k+1}) \beta_{k+1|n}(x_{k+1})}{\int K^{u_k}(y_{k+1},dx_{k+1})  \beta_{k+1|n}(x_{k+1})}
  \end{align*}
  and we can set
  \[
    \lambda_{k|n}(y_{(k+1):n},u_{k:n},A)= \frac{\int_A K^{u_k}(y_{k+1},dx_{k+1}) \beta_{k+1|n}(x_{k+1})}{\int K^{u_k}(y_{k+1},dx_{k+1})  \beta_{k+1|n}(x_{k+1})}.
  \]

  \item Using $(iii)$ we find that
  \[
   F_{k|n}[y_{(k+1):n},u_{k:n}](x,A) \geq \frac{\varsigma^-(y_{k+1})}{\varsigma^+(y_{k+1})} \lambda_{k|n}(y_{(k+1):n},u_{k:n},A)
  \]
  and thus Assumption $4.3.12$ holds and Lemma $4.3.13$ gives
  \[
   \delta(F_{k|n}) \leq \rho_0(y_{k+1}) =  1-\frac{\varsigma^-(y_{k+1})}{\varsigma^+(y_{k+1})}.
  \]
\end{enumerate}
\end{proof}

\begin{theorem}[Proposition 4.3.23 in \citet{Cappe}]
Under the strong mixing conditions the following holds
\begin{enumerate} 
  \item We let $\nu$ and $\nu'$ be two different initial distributions for $X_0$. Now for $k\leq n$
  \begin{align*}
   & ||\phi_{\nu,k|n}[y_{0:n},u_{0:(n-1)}] - \phi_{\nu',k|n}[y_{0:n},u_{0:(n-1)}]||_{TV} \\
   & \hspace{1cm} \leq \left[\prod_{j=1}^k\rho_0(y_j)\right]||\phi_{\nu,0|n}[y_{0:n},u_{0:(n-1)}] - \phi_{\nu',0|n}[y_{0:n},u_{0:(n-1)}]||_{TV} \\
   & \hspace{1cm} \leq 2\left[\prod_{j=1}^k\rho_0(y_j)\right].
  \end{align*}

  \item
  For any non-negative integers $j,k,n$ such that $ j \leq k \leq n$
  \[
   ||P_{\nu}(X_k \in \cdot \,|y_{0:n},u_{0:(n-1)}) - P_{\nu}(X_k \in \cdot \,| Y_{j:n}, u_{j:(n-1)})||_{TV}
   \leq 2 \prod_{i=j}^k \rho_0(y_i)
  \]
  where $\nu$ is the initial distribution of $X_0$.
\end{enumerate}
\end{theorem}

\begin{proof}
 \begin{enumerate} 
  \item Earlier we had
  \[
 ||\phi_{\nu,k|n} - \phi_{\nu',k|n}||_{TV} \leq \prod_{i=1}^k\delta\left( F_{i-1|n}(y_{i:n},\cdot)\right)||\phi_{\nu,0|n} - \phi_{\nu',0|n}||_{TV}
\]
and the first inequality now follows from the Lemma $4.3.22$ part $(iv)$. The factor ``2" follows from using the triangle inequality on the difference of two probability measures.
\item This is just like part $(i)$ except we consider different initial distributions for $X_j$.
 \end{enumerate}
\end{proof}

\subsection{Bounds on score function, Chapter 12 in \citet{Cappe}}


Set $h_{k,x}(\theta) = \log \left[\int g(x_k,Y_k)P(X_k \in
dx_k|Y_{0:(k-1)},u_{0:(k-1)},X_0=x)\right]$. Then our usual loglikelihood is $l_{x,n}(\theta)
= \sum_{k=0}^n h_{k,x}(\theta)$

We now wish to use the expression for $\frac{\partial}{\partial \theta}
l(\theta)$ derived in the last section.
We have that $\frac{\partial}{\partial \theta} l_{x,n}(\theta) = \sum_{k=0}^n \dot{h}_{k,x}(\theta)$ but also
\[
 \frac{\partial}{\partial \theta} l_{x,n}(\theta) = \frac{\partial}{\partial
\theta} l_{x,0}(\theta) +\sum_{k=1}^n\left\{\frac{\partial}{\partial \theta}
l_{x,k}(\theta)-\frac{\partial}{\partial \theta} l_{x,k-1}(\theta)\right\}
\]
This gives an alternative expression of $\dot{h}_{k,x}$. We get $\dot{h}_{0,x}(\theta)= \frac{\partial}{\partial \theta} \log g(x_0,Y_0)$ and for $k\geq 1$
\begin{align*}
 \dot{h}_{k,x}(\theta) & =\frac{\partial}{\partial \theta}
l_{x,k}(\theta)-\frac{\partial}{\partial \theta} l_{x,k-1}(\theta) \\
&  = E\left[\sum_{i=1}^k
\phi(X_{i-1},X_i,Y_i)|Y_{1:k},u_{0:(k-1)},X_0=x\right] \\
& \hspace{0.5cm} -E\left[\sum_{i=1}^{k-1}
\phi(X_{i-1},X_i,Y_i)|Y_{1:(k-1)},u_{0:(k-2)},X_0=x\right]
\end{align*}
This expression can be generalized to starting the process at other values than zero;
\begin{align*}
 \dot{h}_{k,m,x}(\theta) & = \log \left[\int g(x_k,Y_k)P(X_k \in
dx_k|Y_{m:(k-1)},u_{m:(k-1)},X_m=x)\right] \\
&  = E\left[\sum_{i=m+1}^k
\phi(X_{i-1},X_i,Y_i)|Y_{(m+1):k},u_{m:(k-1)},X_{m}=x\right] \\
& \hspace{0.5cm} - \hspace{0.5cm} E\left[\sum_{i=m+1}^{k-1}
\phi(X_{i-1},X_i,Y_i)|Y_{(m+1):(k-1)},u_{m:(k-2)},X_{m}=x\right]
\end{align*}
This is done in \citet{Cappe} to extend the process to minus infinity ($m \rightarrow -\infty$). We don't extend the process to infinity, but rather think of $m$ as indicating lack of information, that is assuming that the process starts at $X_m$.

We now prove a modified Lemma $12.5.3$ where we use the expression developed above.

\begin{theorem}[Lemma 12.5.3 in \citet{Cappe} modified]
Assuming strong mixing conditions. Then for $k\geq 1$ \citet{Cappe} prove the following inquality in the HMM case:
 \[
  (E |\dot{h}_{k,-m,x}(\theta) - \dot{h}_{k,\infty}(\theta)|^2)^{1/2}
  \leq 12\left(E\left[\sup_{x,x' \in X}
  |\phi_{\theta}(x,x',Y_1)|^2\right]\right)^{1/2}
  \frac{\rho^{(k+m)/2-1}}{1-\rho}.
 \]
 We don't extend the process to $-\infty$, but rather starting at $X_0$ and we prove the following inequality, also for $k\geq 1$
  \[
  (E |\dot{h}_{k,0,x_0}(\theta) - \dot{h}_{k,m,x}(\theta)|^2)^{1/2}
  \leq 8\sup_{x,x'\in X,u \in U, y \in Y} ||\phi_{\theta}(x,x',y,u)||\frac{\rho^{(k-m)/2-1}}{1-\rho}
 \]
 where $\rho = \max_{y \in Y} \rho_0(y)$ (See Theorem \ref{lemma 4.3.22}).
\label{lemma 12.5.3m}
\end{theorem}

\begin{proof}
 From the representation derived above for $\dot{h}$ we have
  \begin{align}
    \dot{h}_{k,0,x_0}(\theta) & = E\left[\sum_{i=1}^k
    \phi(X_{i-1},X_i,Y_i,u_{i-1})|Y_{1:k},u_{0:(k-1)},X_0=x_0\right] \label{tag1} \\
    & \hspace{0.5cm} -E\left[\sum_{i=1}^{k-1}
    \phi(X_{i-1},X_i,Y_i,u_{i-1})|Y_{1:(k-1)},u_{0:(k-2)},X_0=x_0\right] \label{tag2}
  \end{align}
 and
  \begin{align}
    \dot{h}_{k,m,x}(\theta) & = E\left[\sum_{i=m+1}^k
    \phi(X_{i-1},X_i,Y_i,u_{i-1})|Y_{(m+1):k},u_{m:(k-1)},X_m=x\right] \label{tag3} \\
    & \hspace{0.5cm} -E\left[\sum_{i=m+1}^{k-1}
    \phi(X_{i-1},X_i,Y_i,u_{i-1})|Y_{(m+1):(k-1)},u_{m:(k-2)},X_m=x\right] \label{tag4}
  \end{align}
Just like in the proof of Lemma $12.5.3$ in \citet{Cappe} we match together different pairs of terms within the sums, depending on their index $i$. More specifically for $i=k$ we match together the terms where $i=k$ in \eqref{tag1} and \eqref{tag3}. For $\frac{k+m}{2} \leq i <k$ we match the terms in \eqref{tag1} with \eqref{tag3} and the terms in \eqref{tag2} with those in \eqref{tag4}. For $m+1\leq i <\frac{k+m}{2}$ we match terms in \eqref{tag1} with terms in \eqref{tag2} and terms in \eqref{tag3} with those in \eqref{tag4}.  That leaves $i\in {1,\ldots,m}$ in $\dot{h}_{k,0,x_0}$ where we match \eqref{tag1} and \eqref{tag2}.

  If we look at the case where \eqref{tag1} is matched with \eqref{tag3} we have
  \begin{align*}
  & ||E[\phi_{\theta}(X_{i-1},X_i,Y_i,u_{i-1})|Y_{(m+1):k},u_{m:(k-1)},X_{m}=x]
   -E[\phi_{\theta}(X_{i-1},X_i,Y_i,u_{i-1})|Y_{1:k},u_{0:(k-1)}]|| \\
  & = |\int_{x_{m}}\int_{x_{i-1}}\int_{x_i} \phi_{\theta}(x_{i-1},x_i,Y_i,u_i) F_{i-1}(x_{i-1},dx_i)
   P_{\theta}(X_{i-1}\in dx_{i-1}|Y_{(m+1):k},u_{m:(k-1)},X_{m}=x) \\
  & \hspace{1cm} \times[\delta_x(dx_{m}) - P_{\theta}(X_{m}\in dx_{m}|Y_{1:k},u_{0:(k-1)})]| \\
  & \leq 2 \sup_{x,x'\in X, u \in U} ||\phi_{\theta}(x,x',Y_i,u)||\rho^{(i-1)-m}
  \end{align*}
  where $F_{i-1} = F_{i-1;\theta}[y_{i:k},u_{(i-1):k}]$ is the Forward Smoothing Kernel, and the inequality stems from Proposition $4.3.23$ $(i)$ where the second line can be thought of as two different initial distributions for $X_m$, and the kernel $F$ is bounded by $1$.

  Matching \eqref{tag2} with \eqref{tag4} is similar.
  For matching \eqref{tag1} with \eqref{tag2} and \eqref{tag3} with \eqref{tag4} we need a ``Backwards bound";
  \[
   ||P_{\theta}(X_i\in  \cdot \,|Y_{(m+1):k},u_{m:(k-1)},X_{m}=x) - P_{\theta}(X_i \in \cdot \,|Y_{(m+1):(k-1)},u_{m:(k-2)},X_{m}=x)||_{TV}
   \]
   \[
    \leq 2\rho^{k-1-i}
  \]
  that is established below, see Theorem \ref{prop 12.5.4}. For matching \eqref{tag3} with \eqref{tag4} we get
  \begin{align*}
   & ||E_{\theta}[\phi_{\theta}(X_{i-1},x_i,Y_i,u_{i-1})|Y_{(m+1):k},u_{m:(k-1)},X_{m} = x]
   -E_{\theta}[\phi_{\theta}(X_{i-1},x_i,Y_i,u_{i-1})|Y_{(m+1):(k-1)},u_{m:(k-2)},X_{m} = x]|| \\
  & = |\int_{x_{i-1}}\int_{x_i} \phi_{\theta}(x_{i-1},x_i,Y_i,u_{i-1}) B_{i}(x_i,dx_{i-1}) \\
  & \hspace{1cm} \times[P_{\theta}(X_{i}\in dx_{i}|Y_{(m+1):k},u_{m:(k-1)},X_{m}=x) - P_{\theta}(X_{i}\in dx_{i}|Y_{(m+1):(k-1)},u_{m:(k-2)},X_{m}=x)]| \\
  & \leq 2 \sup_{x,x'\in X,u \in U} ||\phi_{\theta}(x,x',Y_i,u)||\rho^{(k-1)-i}
  \end{align*}
  where $B_i$ is the Backwards Smoothing Kernel described below.
  Matching \eqref{tag1} with \eqref{tag2} is a special case of the above.

  Going back to our original objective, we have
  \[
   \left(E_{\theta}||\dot{h}_{k,m,x}(\theta) - \dot{h}_{k,0,x_0}(\theta)||^2\right)^{1/2}
   =\left(E||\sum a_i||^2\right)^{1/2}
  \]
  where $\sum a_i$ is a sum over the pairs we considered above. Now by Minkowski's inequality we have
  \[
   \left(E||\sum a_i||^2\right)^{1/2} \leq \sum \left( E||a_i||^2\right)^{1/2}
  \]
  Now we have that $||a_i|| \leq 2 \sup_{x,x'\in X,u \in U} ||\phi_{\theta}(x,x',Y_i,u)||\rho^{b_i}$ where $b_i$ is the power of $\rho$ associated with $a_i$ and therefore
  \[
    \left(E_{\theta}||\dot{h}_{k,m,x}(\theta) - \dot{h}_{k,0,x_0}(\theta)||^2\right)^{1/2} \leq \sum 2\left(E \sup_{x,x'\in X,u \in U} ||\phi_{\theta}(x,x',Y_i,u)||^2\right)^{1/2}\rho^{b_i}
  \]
  At this point \citet{Cappe} argue that since in their case the process was started at inifinity and the process is homogeneous the expected value over $Y_i$ is always the same by stationarity, and $Y_i$ can be exchanged by $Y_1$. Since arguing for stationarity is difficult in a POMPDP setting, we also take the supremum over $Y$ and remember that this set is also finite so that
  \begin{align*}
   \left(E_{\theta}||\dot{h}_{k,m,x}(\theta) - \dot{h}_{k,0,x_0}(\theta)||^2\right)^{1/2} & \leq 2\left( \sup_{x,x'\in X,u \in U, y \in Y} ||\phi_{\theta}(x,x',y,u)||^2\right)^{1/2}\sum \rho^{b_i} \\
   & = 2 \sup_{x,x'\in X,u \in U, y \in Y} ||\phi_{\theta}(x,x',y,u)||\sum \rho^{b_i}
  \end{align*}
We now deal with the sum of $\rho$ to different powers.

From $i=k$ we have $\rho^{k-1-m}$ where we matched \eqref{tag1} with \eqref{tag3}. For $\frac{k+m}{2}\leq i < k$ we have  $2\rho^{i-1-m}$ where we matched \eqref{tag1} with \eqref{tag3} and \eqref{tag2} with \eqref{tag4}. For $m+1 \leq i <\frac{k+m}{2}$ we have $2\rho^{k-1-i}$ from matching \eqref{tag1} with \eqref{tag2} and \eqref{tag3} with \eqref{tag4}. Finally for $1\leq i \leq m$ we have $\rho^{k-1-i}$ from matching \eqref{tag1} with \eqref{tag2}. This gives
\[
 \sum \rho^{b_i} = \rho^{k-1-m}  +  \sum_{i=(k+m)/2}^{k-1}2\rho^{i-1-m} + \sum_{i=m+1}^{(k+m)/2-1} 2\rho^{k-1-i} +\sum_{i=1}^m\rho^{k-1-i}
\]
\[
 \leq 2\sum_{i = (k+m)/2}^{\infty} \rho^{i-1-m} + 2\sum_{i=-\infty}^{(k+m)/2-1}\rho^{k-1-i}
\]
\[
 =2 \frac{\rho^{(k-m)/2-1}}{1-\rho} + 2\frac{\rho^{(k-m)/2}}{1-\rho} \leq 4\frac{\rho^{(k-m)/2-1}}{1-\rho}
\]
Thus, finally we have
  \[
   \left(E_{\theta}||\dot{h}_{k,m,x}(\theta) - \dot{h}_{k,0,x_0}(\theta)||^2\right)^{1/2}
   \leq 8\sup_{x,x'\in X,u \in U, y \in Y} ||\phi_{\theta}(x,x',y,u)||\frac{\rho^{(k-m)/2-1}}{1-\rho}
  \]

\end{proof}

\begin{theorem}[Proposition 12.5.4 modified]
\label{prop 12.5.4}
  \[
   ||P_{\theta}(X_i\in  \cdot \,|Y_{(m+1):k},u_{m:(k-1)},X_{m}=x)
   - P_{\theta}(X_i \in \cdot \,|Y_{(m+1):(k-1)},u_{m:(k-2)},X_{m}=x)||_{TV} \leq 2\rho^{k-1-i}
  \]
\end{theorem}
\begin{proof}
 The idea behind this proof is to replicate all the results derived so far for the Backward Smoothing Kernel. That is, conditional on $Y_{(m+1):k}$, $u_{m:(k-1)}$ and $X_m = x_m$ the time-reversed process $X$ is a non-homogeneous Markov Chain, where the conditional probability of moving from $X_{j+1}$ to $X_j$ given all the observations $Y_{(m+1):(k-1)}$, controls $u_{m:(k-2)}$ and initial condition ends up only depending on $Y_{(m+1):j}$, $u_{m:j}$ and the initial condition, and is governed by the Backwards Smoothing Kernel given by
 \[
  B_{x_m,j}[u_{(m+1):j},u_{m:j}](x,f)
  =\frac{\int\cdots\int \prod_{r=m+1}^j Q^{u_{r-1}}(x_{r-1},dx_r)g(x_r,y_r)f(x_j)Q^{u_j}(x_j,x)}{\int\cdots\int \prod_{r=m+1}^j Q^{u_{r-1}}(x_{r-1},dx_r)g(x_r,y_r)Q^{u_j}(x_j,x)}
 \]
Just as we did in Lemma $4.3.22$ we can show
\begin{align*}
 \frac{\varsigma^-(y_j)}{\varsigma^+(y_j)}\nu_{x_m, j}[y_{m+1},u_{m:j}]
 & \leq B_{x_m,j}[y_{(m+1):j},u_{m:j}](x_j,\cdot \,) \\
 & \frac{\varsigma^+(y_j)}{\varsigma^-(y_j)}\nu_{x_m, j}[y_{m+1},u_{m:j}]
\end{align*}
where
\[
 \nu_{x_m, j}[y_{m+1},u_{m:j}](f)
 =\frac{\int\cdots\int \prod_{r=m+1}^j Q^{u_{r-1}}(x_{r-1},dx_r)g(x_r,y_r)f(x_j)}{\int\cdots\int \prod_{r=m+1}^j Q^{u_{r-1}}(x_{r-1},dx_r)g(x_r,y_r)}.
\]
As we showed there this gives
\[
 \delta(B_{x_m,j}) \leq 1-\frac{\varsigma^-(y_j)}{\varsigma^+(y_j)}
\]
We now get that the 2 smoothers we are interested in can be thought of as smoothers of the reversed Markov Chain from $k-1$ to $m$ with 2 different initial distributions for $X_{k-1}$, the starting position. We get
\begin{align*}
&  ||P_{\theta}(X_i\in  \cdot \,|Y_{(m+1):k},u_{m:(k-1)},X_{m}=x)
   - P_{\theta}(X_i \in \cdot \,|Y_{(m+1):(k-1)},u_{m:(k-2)},X_{m}=x)||_{TV} \\
& \leq ||P_{\theta}(X_{k-1}\in  \cdot \,|Y_{(m+1):k},u_{m:(k-1)},X_{m}=x)
   - P_{\theta}(X_{k-1} \in \cdot \,|Y_{(m+1):(k-1)},u_{m:(k-2)},X_{m}=x)||_{TV} \\
& \hspace{0.5cm} \times \prod_{j=i+1}^{k-1} \delta(B_{x_m,j}) \\
& \leq 2\prod_{j=i+1}^{k-1} \rho_0(y_j) \leq 2 \rho^{k-1-i}
\end{align*}
(where $\rho = \max_{y\in Y} \rho_0(y)$).
\end{proof}

\end{document}